\documentclass[10pt]{iopart}
\usepackage{iopams}  
\usepackage{setstack}
\usepackage[numbers,sort&compress]{natbib}
\pdfoutput=1
\usepackage[utf8]{inputenc}
\usepackage[english]{babel}
\usepackage[T1]{fontenc}
\usepackage{hyperref}
\usepackage{setspace}
\usepackage{braket}
\usepackage{dsfont}
\usepackage{refcount}

\usepackage{tikz}
\usepackage{pgfplots}

\usepackage{xparse}

\usepgfplotslibrary{fillbetween}

\usepackage{dsfont}

\usepackage{graphicx}

\newtheorem{definition}{Definition}
\newtheorem{lem}[definition]{Lemma}
\newtheorem{prop}[definition]{Proposition}
\newtheorem{thm}[definition]{Theorem}
\newtheorem{cor}[definition]{Corollary}
\newtheorem{conjecture}[definition]{Conjecture}

\newenvironment {proof}{\textit{Proof.}}{\hfill \ensuremath{\square}\vspace{1em}}


\global\long\def\text#1{{\rm #1}}

\global\long\def\binom#1#2{{#1 \choose #2}}

\renewcommand{\substack}[2]{{#1 \atop #2}}

\global\long\def\one{\mathds{1}}
\global\long\def\trace{\Tr}
\global\long\def\diag{{\rm diag}}
\global\long\def\ketbra#1#2{\ket{#1}\!\bra{#2}}

\global\long\def\X{X}
\global\long\def\Y{\hspace{0.1em}Y}
\global\long\def\Z{Z\hspace{0.13em}}

\pgfplotsset{compat=1.14}

\begin{document}

\title{Characterizing quantum states via sector lengths}

\author{N Wyderka and O Gühne}

\address{Naturwissenschaftlich-Technische Fakult{\"a}t, 
Universit{\"a}t Siegen, Walter-Flex-Stra{\ss}e~3, 57068 Siegen, 
Germany}

\begin{abstract}
Correlations in multiparticle systems are constrained by restrictions from 
quantum mechanics. A prominent example for these restrictions are monogamy 
relations, limiting the amount of entanglement between pairs of particles in
a three-particle system. A powerful tool to study correlation constraints 
is the notion of sector lengths. These quantify, for different $k$, the 
amount of $k$-partite correlations in a quantum state in a basis-independent
manner. We derive tight bounds on the sector lengths in multi-qubit states 
and highlight applications of these bounds to entanglement detection, monogamy 
relations and the $n$-representability problem. For the case of two- and
three qubits we characterize the possible sector lengths completely and
prove a symmetrized version of strong subadditivity for the linear entropy.
\end{abstract}

\noindent{\it Keywords\/}: correlations, sector length, linear entropy, monogamy, strong subadditivity
\submitto{\jpa}

\section{Introduction}
Correlations between particles are central for many physical phenomena, ranging from
phase transitions in condensed matter systems to applications like quantum metrology.
These non-local correlations, however, cannot be completely arbitrary as they are 
subject to restrictions from quantum mechanics. A prominent example for these kind 
of restrictions concerns entanglement in three-partite systems.  Here, a monogamy 
relation known as the Coffman-Kundu-Wootters-inequality limits the sum of the 
entanglement between the first and the second party and the entanglement 
between the first and the third party~\cite{Coffman2000}.

A useful concept to describe the correlation structure of quantum states is
the so-called sector length \cite{aschauer2003local}. Sector lengths for $n$-party
quantum states are quadratic expressions and quantify, for different $k\leq n$, 
the amount of $k$-partite correlations in the state. Thus, to any $n$-qubit state 
one assigns a tuple $(A_1,\ldots A_n)$ of  sector lengths and infers 
properties of the state based on the sector length configuration. Sector lengths
are, as all correlation measures, invariant under local unitary transformations 
\cite{Gingrich2002}.
They are expressible in terms of purities of the reduced states of a system, and 
as such, they can be experimentally characterized by randomized measurements on 
a single copy of the state \cite{brydges2019probing}.

Consequently, sector lengths have been used for many purposes, for example, entanglement 
detection \cite{klockl2015characterizing}, deriving monogamy relations \cite{eltschka2015monogamy} and excluding the existence of certain 
absolutely maximally entangled states \cite{huber2017absolutely}. In the context of quantum coding theory, sector lengths are known as weight enumerator theory and are used to characterize quantum codes \cite{gottesman1997stabilizer}. Furthermore, 
bounds on $k$-sector lengths with $k<n$ can be used to find necessary conditions 
for a set of reduced density matrices of up to $k$ of the parties to be 
compatible with a global state. This problem is known as the representability problem  \cite{higuchi2003one, klyachko04, Klyachko2006}. 

Finally, the purity of a state is related to its coherence and serves as a complementary quantity \cite{streltsov2017colloquium,cheng2015complementarity,giorda2017two,streltsov2018maximal,kumar2017quantum}. As sector lengths are in one-to-one correspondence with purities of subsystems of a state, bounds on sector lengths can be used to shed light on coherence structures in multipartite quantum states.

In this paper we first find exact bounds on individual sectors $A_k$ for 
$k\in\{2,3,n\}$. Furthermore, we fully classify the set of admissible 
tuples of sector lengths for two- and three-qubit states by characterizing all 
bounds on linear combinations of the sector lengths. Interestingly, we show that in these cases,
the admissible sector lengths form a convex polytope that can be characterized
by few constraints. One of these constraints can be viewed as a symmetrized 
version of strong subadditivity (SSA) of the linear entropy.

The paper is structured as follows: First, we will define sector lengths and review 
known relations between them. Then, we find tight bounds on the individual sectors 
$A_2,A_3$ and $A_n$ in $n$-qubit states. There, we highlight connections to 
monogamy of entanglement and apply our results to the representability problem  
and the problem of entanglement detection. Next, we extensively study the cases 
of two and three qubits. To that end, we describe how to translate between sector 
lengths, linear entropies and mutual linear entropies, which are in one-to-one correspondence. 
We completely characterize the allowed sector length configurations by 
considering a symmetrized SSA for linear entropies for three-qubit systems. 
While it is known that SSA does not hold in general for the linear entropy 
\cite{Petz2015}, we show, using techniques from semidefinite programming (SDP),
that the symmetrized version is true for three qubits. 

\section{Basic definitions}

Consider a quantum state $\rho$ of $n$ qubits. We expand the state in terms of the Bloch basis, i.e., in terms of tensor products of Pauli matrices and group the terms according to the number of non-trivial Pauli matrices in each term, i.e.,
\begin{eqnarray} \fl
\phantom{Ig}\rho &= \frac{1}{2^{n}}\left( \one+\sum_{i=1}^{n}\sum_{a\in\left\{ x,y,z\right\} }\alpha_{i,a}\sigma_{a}^{(i)} + \sum_{i<j=1}^{n}\sum_{a,b\in\left\{ x,y,z\right\} }\beta_{ij,ab}\sigma_{a}^{(i)}\sigma_{b}^{(j)}+\ldots\right) \label{eq:bloch-1} \nonumber \\ \fl
 &=  \frac{1}{2^{n}}(\one+P_{1}+P_{2}+\ldots+P_{n}),
\end{eqnarray}
where $\sigma_{a}^{(i)}$ denotes the Pauli operator acting on particle
$i$ in direction $a\in \{x,y,z\}$, padded with identities on the other particles. 
We group the terms by the number of non-trivial $\sigma$-matrices
and call the sum of all terms with $k$ matrices $P_{k}$. 
As $\rho$ is hermitian, the coefficients $\alpha,\beta,\ldots$ are
real and the $P_k$ are hermitian, as well. Note that the only term that is not traceless is the unit operator,
thus the normalization $2^{-n}$ is chosen such that $\trace(\rho)=1$.

As an example, consider the Greenberger-Horne-Zeilinger (GHZ) state of three particles, $\ket{\text{GHZ}}=\frac{1}{\sqrt{2}}(\ket{000}+\ket{111})$. In terms of Pauli operators, the density matrix reads
\begin{eqnarray} \fl
\phantom{Ig}\rho_{\text{GHZ}} =  \frac{1}{2^3}( \one\one\one+ZZ\one+Z\one Z+\one ZZ +XXX-XYY-YXY-YYX).
\end{eqnarray}
Here and in the following, we skip the tensor product symbol for better readability and write $X$, $Y$ and $Z$ for the Pauli matrices $\sigma_x, \sigma_y$ and $\sigma_z$. Thus, $\one ZZ$ means $\one\otimes \sigma_z \otimes \sigma_z$.
In this example, $P_1 = 0$, $P_2 = ZZ\one+Z\one Z+\one ZZ$ and $P_3 = XXX-XYY-YXY-YYX$.

The sector length $A_{k}$ captures the amount of $k$-body correlations in a state. It is defined as \cite{aschauer2003local}
\begin{eqnarray}
A_{k}(\rho) := &\frac{1}{2^{n}}\trace[P_{k}(\rho)^{2}] 
             = \sum_{\Xi_k}\trace[\Xi_k\rho]^{2},
\end{eqnarray}
where the sum spans over all Pauli operators $\Xi_k$ acting on $k$ of the parties nontrivially.
Using the expansion (\ref{eq:bloch-1}), this means that $A_1 = \sum_{i=1}^n \sum_{a\in\{x,y,z\}} \alpha_{i,a}^2$ is the sum of the squares of the local Bloch vectors, $A_2 = \sum_{i<j} \sum_{a,b\{x,y,z\}} \beta_{ij,ab}^2$, etc.  Note that $A_{0}=1$ by normalization. As an example, the GHZ state above has sector length configuration $(A_1,A_2,A_3) = (0,3,4)$. We stress that while we used an explicit choice of a basis to define the $A_i$ they are invariant under local unitary operations, and as such, they are independent of the choice of the local basis.

Considering the set of all quantum states of $n$ parties, we are
interested in the tuples $(A_{1},\ldots,A_{n})$ that are
attainable. First, we find tight bounds on the individual sectors.
These bounds can always be realized
by pure states, as the quantity $A_{i}$ is convex: $A_{i}(\rho)\leq\sum_{j}p_{j}A_{i}(\ket{\psi_{j}})$
if $\rho=\sum_{j}p_{j}\ketbra{\psi_{j}}{\psi_{j}}$.
Thus, we start by listing some basic facts about sector lengths of pure states. In this case, $\rho=\rho^{2}$ and therefore $\sum_{k=0}^{n}A_{k}=2^{n}$. In fact, the sum of all sector lengths is equal to the purity of the state up to a factor of $2^n$.

Additionally, there are many relations among the $A_{i}$ for pure states: Choosing  $\rho=\ketbra{\psi}{\psi}$ and a subsystem $S\subset\{1,\ldots,n\}$, one can define the reduced state of particles $S$, $\rho_S := \trace_{\bar{S}}(\rho)$, where $\bar{S}=\{1,\ldots,n\}\setminus S$. Using the Schmidt decomposition,
one can show that $\trace[\rho(\rho_{S}\otimes \one_{\bar{S}})]=\trace[\rho(\one_{S} \otimes \rho_{\bar{S}} )]$.
Summing this identity over all subsets of size $m\leq n$ yields an equation for pure states that is expressible in terms of sector lengths \cite{huber2018ulam}:
\begin{eqnarray}
M_m := {} 2^{m}\sum_{j=0}^{n-m}&\binom{n-j}{m}A_{j}-  2^{n-m}\sum_{j=0}^{m}\binom{n-j}{n-m}A_{j}=0\label{eq:schmidt}
\end{eqnarray}
for all integer $0\leq m\leq n$, where for $m=0$ one obtains the purity equality, $\sum_i A_i = 2^n$. The relations $M_m = 0$ are known in the more general context of coding theory as MacWilliams' identities \cite{macwilliams1977theory}. A subset of $\left\lceil \frac{n}{2}\right\rceil$ of them are linearly independent equations and allows for the elimination
of certain $A_{i}$ if the state is known to be pure.

\section{Bounds on individual sector lengths}

We start by proving some bounds on the smallest sector lengths. First of all, it is known that
\begin{eqnarray}\label{eq:bounda1}
    A_1 \leq n
\end{eqnarray} for $n$-qubit states, which is attained for pure product states like $\ket{0\ldots0}$.
This is because $A_1(\rho)$ is given by the sum of all $A_1(\rho_i)$ of the one-party reduced states $\rho_i$ of $\rho$, corresponding to the squared magnitude of the Bloch vector, which is bounded by one.

\subsection{\texorpdfstring{Bounds on $A_2$}{Bounds on A2}} \label{sec:boundsa2}

While the bound (\ref{eq:bounda1}) is trivial, the tight bounds on $A_{2}$ are only known for
$n=2$ and $n=3$ so far. For $n=2$, the bound is given by $A_{2}\leq3$, as for the purity holds $\trace(\rho^2) = 2^{-2}[1+A_1(\rho)+A_2(\rho)]\leq 1$. For $n=3$, however, we obtain from
$M_1 = 0$ in (\ref{eq:schmidt}) for pure states that $A_{2}=3$, and therefore by convexity for all states $A_2\leq 3$. We will show here that for $n\geq3$, the bound is given by $A_{2}\leq\binom{n}{2}$, using the following Lemma.
\begin{lem}
\label{lem:l1}If for all quantum states $\rho$ of $n_{0}$ qubits
it holds that $A_{k}(\rho)\leq\binom{n_{0}}{k}$,
then for all states $\rho^{\prime}$ of $n\geq n_{0}$ qubits, it holds that $A_{k}(\rho^{\prime})\leq\binom{n}{k}$.
\end{lem}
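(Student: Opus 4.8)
The plan is to deduce the $n$-qubit bound from the $n_0$-qubit bound by averaging over all $n_0$-element subsystems. The starting observation is that $A_k$ localizes to reduced states: if $S\subseteq\{1,\ldots,n\}$ and $\Xi_k$ is a Pauli operator whose nontrivial support lies inside $S$, then $\trace[\Xi_k\rho^{\prime}]=\trace[\Xi_k\rho^{\prime}_S]$, so the squared coefficient of $\Xi_k$ contributes to both $A_k(\rho^{\prime})$ and $A_k(\rho^{\prime}_S)$. Expanding $\rho^{\prime}$ in the Bloch basis and using orthogonality of Pauli operators under the Hilbert–Schmidt inner product (so no cross terms survive), I would therefore compute
\begin{eqnarray}
\sum_{|S|=n_0}A_k(\rho^{\prime}_S)=\binom{n-k}{n_0-k}\,A_k(\rho^{\prime}),
\end{eqnarray}
since a Pauli operator acting nontrivially on exactly $k$ fixed parties is supported inside exactly $\binom{n-k}{n_0-k}$ of the $n_0$-subsets of $\{1,\ldots,n\}$.

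Next I would apply the hypothesis $A_k(\rho^{\prime}_S)\leq\binom{n_0}{k}$ to each of the $\binom{n}{n_0}$ terms on the left-hand side, obtaining
\begin{eqnarray}
\binom{n-k}{n_0-k}\,A_k(\rho^{\prime})\leq\binom{n}{n_0}\binom{n_0}{k}.
\end{eqnarray}
Finally I would invoke the elementary ``subset of a subset'' identity $\binom{n}{n_0}\binom{n_0}{k}=\binom{n}{k}\binom{n-k}{n_0-k}$ (counting in two ways the pairs $(T,S)$ with $T\subseteq S\subseteq\{1,\ldots,n\}$, $|T|=k$, $|S|=n_0$), divide through by the positive coefficient $\binom{n-k}{n_0-k}$, and conclude $A_k(\rho^{\prime})\leq\binom{n}{k}$. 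The case $n=n_0$ is the trivial base of the induction, and since $k\leq n_0$ is implicit in the statement all binomial coefficients appearing are positive.

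I do not expect a genuine obstacle here; the argument is a short counting/averaging identity. The only point that needs care is the bookkeeping of multiplicities in the expansion of $\sum_{|S|=n_0}A_k(\rho^{\prime}_S)$: one must check that every weight-exactly-$k$ Pauli term of $\rho^{\prime}$ reappears with the uniform multiplicity $\binom{n-k}{n_0-k}$ and that, for a given $S$, only Pauli operators whose nontrivial support is contained in $S$ (not merely intersecting it) contribute to $A_k(\rho^{\prime}_S)$. Both are immediate once the reduced state is written in its own Bloch basis.
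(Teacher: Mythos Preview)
Your proof is correct and uses essentially the same averaging-over-subsystems idea as the paper. The only difference is cosmetic: the paper proceeds by induction, reducing $n+1$ qubits to $n$ qubits at each step (which is your identity in the special case $n_0=n$, giving multiplicity $\binom{n+1-k}{1}=n+1-k$), whereas you go directly from $n$ to $n_0$ in a single averaging step.
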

\begin{proof}
We prove the Lemma by induction over the number of qubits $n$. 
Let the statement be true for a fixed $n\geq n_{0}$ and consider a state $\rho$ of $n+1$
parties. There are $n+1$ different $n$-party marginal states of $\rho$, $\rho_{\bar{j}}:=\trace_{j}(\rho)$ for $j\in\{1,\ldots,n+1\}$. For each of them it holds
by assumption that $A_{k}(\rho_{\bar{j}})\leq\binom{n}{k}$. 

Every $k$-body correlation among the parties $i_1,\ldots,i_k$ that is present in $\rho$ is also present in the reduced states that contain the parties $i_1,\ldots,i_k$. This is the case for $(n+1-k)$ of the $(n+1)$ different reductions.
Thus,
\begin{eqnarray}
    \sum_{j=1}^{n+1} A_k(\rho_{\bar{j}}) = (n+1-k)A_k(\rho).
\end{eqnarray}
The left hand side of this equation is bounded by assumption by $(n+1)\binom{n}{k}$, thus we have that
\begin{eqnarray}
A_{k}(\rho)\leq\frac{n+1}{n+1-k}\binom{n}{k}=\binom{n+1}{k}.
\end{eqnarray}
\end{proof}
\begin{prop}
\label{prop:c1}For all qubit states of $n\geq3$ parties, it holds that
$A_{2}\leq\binom{n}{2}$. The bound is tight.
\end{prop}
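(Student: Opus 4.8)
The plan is to reduce the claim to the three-qubit case and then invoke Lemma~\ref{lem:l1} with $n_0=3$ and $k=2$. The first task is therefore the base case: every three-qubit state $\rho$ satisfies $A_2(\rho)\leq\binom{3}{2}=3$. For pure three-qubit states this is forced by the MacWilliams identity $M_1=0$ in~(\ref{eq:schmidt}): specializing $n=3$, $m=1$ gives $2[3+2A_1+A_2]-4[3+A_1]=2A_2-6=0$, hence $A_2=3$ for every pure three-qubit state. Since $A_i$ is convex, as recorded in the text, the inequality $A_2(\rho)\leq 3$ then extends to arbitrary three-qubit states. Feeding this into Lemma~\ref{lem:l1} with $n_0=3$ immediately yields $A_2(\rho)\leq\binom{n}{2}$ for all $n\geq 3$, which is the asserted bound.

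For tightness I would exhibit the $n$-qubit GHZ state $\ket{\mathrm{GHZ}_n}=\frac{1}{\sqrt 2}(\ket{0}^{\otimes n}+\ket{1}^{\otimes n})$. For $n\geq 3$ every two-qubit marginal equals $\frac12(\ketbra{00}{00}+\ketbra{11}{11})$, whose only nonvanishing Pauli expectation value on the two relevant qubits is $\langle Z_iZ_j\rangle=1$ (equivalently, $Z_iZ_j$ is a stabilizer while every other weight-two Pauli anticommutes with some stabilizer). Using that the two-body sector decomposes over pairs, $A_2(\rho)=\sum_{i<j}A_2(\rho_{ij})$, one gets $A_2(\ket{\mathrm{GHZ}_n})=\sum_{i<j}1=\binom{n}{2}$, which matches the bound; for $n=3$ this is consistent with the configuration $(0,3,4)$ computed earlier.

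I do not expect a genuine obstacle here: the inductive heavy lifting is already done by Lemma~\ref{lem:l1}, and the extremal state is the standard GHZ state. The only point that needs care is the base case, namely recognizing that the MacWilliams identity $M_1=0$ pins the value $A_2=3$ exactly for pure three-qubit states (rather than merely bounding it), after which convexity and the Lemma do the rest.
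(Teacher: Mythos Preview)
Your proof is correct and follows the paper's argument essentially verbatim: establish $A_2=3$ for pure three-qubit states via $M_1=0$, extend to mixed states by convexity, and then invoke Lemma~\ref{lem:l1}. The only cosmetic difference is the tightness witness---the paper uses the product state $\ket{0}^{\otimes n}$ (which has $A_k=\binom{n}{k}$ for every $k$), whereas you use $\ket{\mathrm{GHZ}_n}$; both saturate the bound.
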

\begin{proof}
For $n=3$, from $M_1 = 0$ in (\ref{eq:schmidt}) we have that $A_{2}=3=\binom{3}{2}$.
Thus, Lemma~\ref{lem:l1} applies and therefore $A_{2}\leq\binom{n}{2}$
for all $n$-qubit states with $n\geq3$. 

Concerning the tightness, consider the pure product state $\ket{0\ldots0}\bra{0\ldots0}=(\frac{\one+Z}{2})^{\otimes n}$.
It has weights given by $(A_{1},A_{2},\ldots,A_{n})=\left(\binom{n}{1},\binom{n}{2},\ldots,\binom{n}{n}\right)$
and reaches the bound.
\end{proof}

Note that in \cite{markiewicz2013detecting} the authors prove a weaker statement of Proposition~\ref{prop:c1} for the sum of all bipartite
correlation terms involving $X$ and $Y$ only, for which the same bound is obtained.

Using the same induction technique and the base case of four qubits, we can prove an even stronger, non-symmetric version of Proposition~\ref{prop:c1} for $n\geq4$, by summing only those contributions to $A_2$ that involve correlations with the (arbitrarily chosen) first qubit.
\begin{prop}\label{prop:c2}
For all qubit states of $n\geq4$ parties, it holds that
$\sum_{j=2}^{n}A_{2}(\rho_{1j})\leq n-1$.
\end{prop}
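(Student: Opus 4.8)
The plan is to mimic the induction argument of Lemma~\ref{lem:l1}, but now tracking only the bipartite correlations that involve the distinguished qubit~$1$. Define $B(\rho) := \sum_{j=2}^{n} A_2(\rho_{1j})$ for an $n$-qubit state $\rho$. The claim is $B(\rho)\le n-1$ for $n\ge 4$, and the natural base case is $n=4$, where one must show $B(\rho)\le 3$ for every four-qubit state. For the inductive step, suppose the bound holds for $n\ge 4$ and let $\rho$ be an $(n+1)$-qubit state. Among the $(n+1)$ single-qubit reductions $\rho_{\bar{\jmath}}$, exactly $n$ of them (those with $j\neq 1$) still contain qubit~$1$, and each such reduction is an $n$-qubit state to which the induction hypothesis applies, giving $B(\rho_{\bar{\jmath}})\le n-1$. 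Now I count how often each term $A_2(\rho_{1k})$ appears: the pair $\{1,k\}$ survives in the reduction $\rho_{\bar{\jmath}}$ precisely when $j\notin\{1,k\}$, i.e.\ for $(n+1)-2 = n-1$ of the $n$ relevant reductions. Hence $\sum_{j\neq 1} B(\rho_{\bar{\jmath}}) = (n-1)\,B(\rho)$, and the left side is at most $n(n-1)$, so $B(\rho)\le n$. This is one short of what we want, so the crude counting does not close on its own — the induction only yields $B(\rho)\le n$, not $n-1$.

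The fix is to prove a \emph{stronger} inductive statement whose self-improvement under the averaging step brings the constant down. A clean choice is to carry the bound $B(\rho)\le n-1$ \emph{together with} the observation that it is only ever saturated in a rigid way; more practically, one strengthens the hypothesis to something like $B(\rho) + c\,A_1(\rho_1) \le n-1$ or tracks a weighted combination of $B$ with the single-qubit sector $A_2(\rho_1)$-type quantities, chosen so that the recursion reproduces itself with the same constant. Equivalently one can follow the route used for $A_2$ and $A_3$ elsewhere in the paper: reduce to a finite-dimensional problem. Since $B(\rho)$ depends only on the two-qubit marginals $\{\rho_{1j}\}_{j\ge 2}$, and these marginals are constrained (they must be mutually compatible via a common state on qubit~$1$, in particular $\rho_1$ is shared), the maximum of $B$ over all $n$-qubit $\rho$ is the optimum of a semidefinite program in the marginal variables; its value can be pinned down exactly, and the base case $n=4$ can be verified this way (an explicit SDP, or by hand using the two-qubit purity bound $1+A_1(\rho_{1j})+A_2(\rho_{1j})\le 4$ summed appropriately with the constraint that the reduced Bloch vector of qubit~$1$ is common to all $n-1$ pairs).

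I would therefore carry out the steps in this order. First, establish the base case $n=4$: show $\sum_{j=2}^{4} A_2(\rho_{1j})\le 3$, either via a small SDP or by combining the three two-qubit purity inequalities $A_2(\rho_{1j})\le 3 - A_1(\rho_{1j}) = 3 - |\vec r_1|^2 - |\vec r_j|^2$ (where $\vec r_1$ is fixed across all three pairs) with a monogamy-type constraint on how large three two-body correlation sectors attached to the same qubit can simultaneously be. Second, set up the averaging identity $\sum_{j\neq 1} B(\rho_{\bar{\jmath}}) = (n-1)B(\rho)$ exactly as above. Third — and this is the crux — identify the correct strengthened quantity $\tilde B(\rho)$ (a fixed linear combination of $B(\rho)$ with local purity terms) for which the averaging step is \emph{scale-stable}, i.e.\ $\tilde B \le n-1$ for $n$ implies $\tilde B\le n$ translates back to $\tilde B\le n$ at level $n+1$, and check that at the base case $\tilde B$ agrees with $B$ on the extremal states. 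The main obstacle is precisely this third step: the naive induction loses a constant, so the real work is finding the right auxiliary term (or the right SDP formulation) that makes the recursion tight; once that is in place, tightness of the final bound follows from the pure product state $\ket{0\cdots0}$, for which every $A_2(\rho_{1j}) = 1$ and hence $B = n-1$.
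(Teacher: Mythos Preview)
Your induction step actually \emph{does} close: you derive $B(\rho)\le n$ for an $(n{+}1)$-qubit state $\rho$, and since the target bound at level $n{+}1$ is precisely $(n{+}1)-1=n$, you are done. The claim that you are ``one short'' is a bookkeeping slip --- you compared $B(\rho)\le n$ against the level-$n$ bound $n-1$ rather than the level-$(n{+}1)$ bound $n$. So the entire second half of your plan (strengthened inductive hypothesis, auxiliary terms, self-improving recursion) is unnecessary. The paper's extension from $n=4$ to general $n$ is essentially the same counting, phrased as a single averaging over all $\binom{n-1}{3}$ four-element subsets containing qubit~$1$: each subset contributes a bound $\le 3$, each pair $\{1,j\}$ appears in $\binom{n-2}{2}$ of them, and $3\binom{n-1}{3}/\binom{n-2}{2}=n-1$. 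This is equivalent to iterating your one-step induction.

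The genuine gap in your proposal is the base case $n=4$. Your two suggested routes --- ``a small SDP'' or summing the purity bounds $A_2(\rho_{1j})\le 3-|\vec r_1|^2-|\vec r_j|^2$ --- do not work as stated: the purity bounds alone give at best $\sum_j A_2(\rho_{1j})\le 9-3|\vec r_1|^2-\sum_j|\vec r_j|^2$, which can be as large as $9$, and you supply no concrete monogamy constraint to bring this down to $3$. The paper's argument for $n=4$ is quite different and fully explicit: partition the $27$ weight-two Pauli operators acting on qubit~$1$ and one other qubit into three sets $M_1,M_2,M_3$ of nine operators each, arranged so that within each $M_i$ all operators pairwise anticommute (the first qubit carries $X,Y,Z$ in the three groups of three, cyclically permuted across the sets). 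For any set of pairwise anticommuting observables with squares equal to the identity one has $\sum_{m\in M}\langle m\rangle^2\le 1$, so summing over the three sets yields $\sum_{j=2}^4 A_2(\rho_{1j})\le 3$ directly.
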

\noindent For the proof, see Appendix A.

Proposition~\ref{prop:c2} states that in a multi-qubit state, the bipartite correlations of a party with any of the other parties, on average cannot exceed one. Note that maximally entangled bipartite reduced states would obey $A_2=3$, and separable two-qubit states obey $A_2\leq 1$. Thus, Propositions~\ref{prop:c1} and \ref{prop:c2} can be seen as monogamy relations limiting the shared entanglement between a party with the rest, and Proposition~\ref{prop:c2} is in close connection to the Osborne-Verstraete relation \cite{Osborne2006}.

Furthermore, these bounds are useful in the context of the $2$-representability problem  \cite{higuchi2003one, klyachko04, Klyachko2006}. There, one wants to decide whether a set of two-body marginals is compatible with a common global state. While the $1$-representability problem for qubits is solved (i.e., the same problem with a set of one-body marginals) \cite{higuchi2003one} and yields a polytope of compatible eigenvalues, the $k$-representability problem for $k>1$ is in general hard to decide \cite{liu2007quantum}. 
However, Proposition \ref{prop:c2} can be turned into a set of necessary conditions on the spectra of a set of two-body marginals in order to be compatible:
\begin{cor}\label{cor:crepresent}
Let $\{\rho_{ij}\}_{1\leq i<j\leq n}$ denote a set of two-qubit states with eigenvalues $\lambda_k^{(ij)}$. Let their compatible one-qubit marginals be denoted by $\{\rho_i\}_{1\leq i\leq n}$ with spectra $\lambda_k^{(i)}$. If they originate from a common global state, then for the spectra of the matrices it holds that for all $1\leq i \leq n$:
\begin{eqnarray}
    2\sum_{j\neq i} \sum_{k=1}^4 (\lambda_k^{(ij)})^2 \leq \sum_{j\neq i} \sum_{k=1}^2 (\lambda_k^{(j)})^2 + (n-1)\sum_{k=1}^2 (\lambda_k^{(i)})^2.
\end{eqnarray}
\end{cor}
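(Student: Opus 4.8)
The plan is to translate Proposition~\ref{prop:c2} into a statement about spectra by expressing the two-body sector lengths through purities and then writing purities as sums of squared eigenvalues. Throughout I use the purity identity already invoked above: for any $m$-qubit state $\sigma$ one has $\trace(\sigma^{2}) = 2^{-m}\sum_{k=0}^{m}A_{k}(\sigma)$ with $A_{0}=1$. For a single qubit this reads $A_{1}(\rho_{i}) = 2\trace(\rho_{i}^{2})-1$, and for a two-qubit marginal $A_{2}(\rho_{ij}) = 4\trace(\rho_{ij}^{2}) - 1 - A_{1}(\rho_{ij})$. Combined with the additivity of the one-body sector, $A_{1}(\rho_{ij}) = A_{1}(\rho_{i}) + A_{1}(\rho_{j})$ --- which holds because the weight-one Pauli terms of $\rho_{ij}$ supported on party $i$ are exactly those of $\trace_{j}(\rho_{ij}) = \rho_{i}$ --- this yields the identity
\begin{eqnarray}
A_{2}(\rho_{ij}) = 4\trace(\rho_{ij}^{2}) + 1 - 2\trace(\rho_{i}^{2}) - 2\trace(\rho_{j}^{2}).
\end{eqnarray}

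Next I would fix $i$, sum this identity over the $n-1$ choices of $j\neq i$, and insert the bound $\sum_{j\neq i}A_{2}(\rho_{ij}) \leq n-1$ from Proposition~\ref{prop:c2}, which applies precisely because the $\rho_{ij}$ are marginals of a common $n$-qubit state. The additive constants $\pm(n-1)$ cancel, and dividing by two leaves
\begin{eqnarray}
2\sum_{j\neq i}\trace(\rho_{ij}^{2}) \leq \sum_{j\neq i}\trace(\rho_{j}^{2}) + (n-1)\trace(\rho_{i}^{2}).
\end{eqnarray}
Finally, since $\trace(\rho_{ij}^{2}) = \sum_{k=1}^{4}(\lambda_{k}^{(ij)})^{2}$, $\trace(\rho_{j}^{2}) = \sum_{k=1}^{2}(\lambda_{k}^{(j)})^{2}$ and likewise $\trace(\rho_{i}^{2}) = \sum_{k=1}^{2}(\lambda_{k}^{(i)})^{2}$, substituting these expressions --- all of which are manifestly invariant under the local unitaries relating different representatives of the marginals --- gives exactly the inequality in the statement.

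I do not expect a real obstacle: the corollary is a direct algebraic repackaging of Proposition~\ref{prop:c2}. The only points requiring a word of care are that the single-body contributions must be routed through $A_{1}(\rho_{ij}) = A_{1}(\rho_{i})+A_{1}(\rho_{j})$ so that they depend only on the one-body marginals, and the observation that both sides ultimately involve only traces of squares, hence only the spectra of the given matrices, so that the hypotheses of the statement are exactly what is needed to evaluate the bound.
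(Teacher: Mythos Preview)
Your proposal is correct and follows essentially the same route as the paper: both express $A_{2}(\rho_{ij})$ in terms of the purities $\trace(\rho_{ij}^{2})$, $\trace(\rho_{i}^{2})$, $\trace(\rho_{j}^{2})$, sum over $j\neq i$, and invoke Proposition~\ref{prop:c2}. Your write-up is in fact slightly more explicit than the paper's about the intermediate step $A_{1}(\rho_{ij}) = A_{1}(\rho_{i}) + A_{1}(\rho_{j})$ and the cancellation of the additive constants.
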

\begin{proof}
Note that for an $n$-qubit state $\rho$, $\trace(\rho^2) = \sum_{k=1}^{2^n} \lambda_k^2$, where $\lambda_k$  are the eigenvalues of $\rho$. Additionally, for the two-body marginal $\rho_{ij}$, the purity is given by $\trace(\rho_{ij}^2) = \frac14(1+A_1^{(i)}+A_1^{(j)}+A_2^{(ij)})$.

This allows to write $A_2^{(ij)}$ as a function of purities and thus as a function of eigenvalues, i.e.~
\begin{eqnarray}
    A_2^{(ij)} & = 4\sum_{k=1}^4 (\lambda_k^{(ij)})^2 - 2\sum_{k=1}^2 [(\lambda_k^{(i)})^2 + (\lambda_k^{(j)})^2]+1,
\end{eqnarray}
where $\lambda_k^{(ij)}$ are the eigenvalues of $\rho_{ij}$ and $\lambda_k^{(i)}, \lambda_k^{(j)}$ the eigenvalues of $\rho_{i},\rho_{j}$, respectively. Then for each fixed choice of $i$, the claim follows by using $\sum_{j=1,j\neq i}^n A_2^{(ij)} \leq n-1$ from Proposition \ref{prop:c2}.
\end{proof}

\subsection{\texorpdfstring{Bounds on $A_3$ and higher sectors}{Bounds on A3 and higher sectors}} \label{sec:boundsa3}

Up to here, the results involved two-body correlations only. In this
section, we generalize some of the statements to three-body correlations
and the sector length $A_{3}$. Recalling the statement of Lemma~\ref{lem:l1},
we know that if for some $n_{0}\geq3$, $A_{3}(\rho)\leq\binom{n_{0}}{3}$
for all $\rho$ of $n_{0}$ qubits, then the same bound holds for
all $n>n_{0}$ as well. The question arises whether such an $n_{0}$
exists. For $n=3$, $A_{3}(\ket{\text{GHZ}})=4>\binom{3}{3}=1$. For $n=4$, there
exist states with $A_{3}=8>\binom{4}{3}=4$, for example the highly entangled state  \cite{kraus2003entanglement, guhne2009entanglement, Osterloh2006}
\begin{eqnarray}\label{eq:5chi}
    \ket{\chi} = \sqrt{6}^{-1}(\ket{0001}+\ket{0010}+\ket{0100}+\ket{1000} + \sqrt2\ket{1111}).
\end{eqnarray}
Nevertheless, for $n\geq5$ the bound holds. To show this, we need to introduce
an additional technique, namely the so-called shadow inequalities \cite{rains2000polynomial}.

Let $M$ and $N$ be two positive semidefinite hermitian operators
acting on an $n$-particle space. Then for all $T\subset\{1,\ldots,n\}$
\cite{rains2000polynomial, eltschka2018exponentially},
\begin{eqnarray}\label{eq:preshadow}
\sum_{S\subset\{1,\ldots,n\}}(-1)^{\vert S\cap \bar T\vert}\trace[\trace_{\bar S}(M)\trace_{\bar S}(N)]\geq0.
\end{eqnarray}

\begin{table}[t]
\caption{Translation of the various sector bounds into inequalities for linear entropy and mutual
entropy.\label{tab:translation}}
\begin{tabular}{c||c|c|}
\br
Origin & Eq.~(\ref{eq:bounda1})  & Proposition~\ref{prop:c1}\tabularnewline
Min.~$n$ & $n \geq 1$ & $n\geq 3$ \tabularnewline

 Sector len.& $A_{1}\leq n$ & $A_{2}\leq\binom{n}{2}$  \tabularnewline

 Lin.~ent.& $S_\text{L}^{(1)}\geq0$ & $S_\text{L}^{(2)}\geq\frac{n-1}{2}S_\text{L}^{(1)}$\tabularnewline

 Mut.~ent.& $I_\text{L}^{(1)}\geq0$ & $I_\text{L}^{(2)}\leq\frac{n-1}{2}I_\text{L}^{(1)}$ \tabularnewline
 \br
\br
Origin & Proposition~\ref{prop:a3bound} &
Corollary~\ref{prop:i3bound}\tabularnewline
Min.~$n$ & $n \geq 5$ & $n \geq 3$\tabularnewline

 Sector len. & $A_{3}\leq\binom{n}{3}$ & $\binom{n}{3} + A_3 \geq \frac13\binom{n-1}{2}A_1 + \frac{n-2}{3}A_2$ \tabularnewline

 Lin.~ent.& $S_\text{L}^{(3)}\geq \frac{n-2}{2}S_\text{L}^{(2)} -\frac{1}{4}\binom{n-1}{2}S_\text{L}^{(1)}$ &
 $S_\text{L}^{(3)} \leq \frac{n-2}{3} S_\text{L}^{(2)} - \frac13\binom{n-1}{2}S_\text{L}^{(1)}$\tabularnewline

 Mut.~ent.& $I_\text{L}^{(3)}\geq \frac{n-2}{2}I_\text{L}^{(2)} -\frac{1}{4}\binom{n-1}{2}I_\text{L}^{(1)}$ &
 $I_\text{L}^{(3)}\leq \frac{n-2}{3} I_\text{L}^{(2)}$\tabularnewline
 \br
\end{tabular}
\end{table}

Here, $\bar S=\{1,\ldots,n\}\setminus S$ and $\trace_{\bar S}$ denotes
the partial trace of systems $\bar S$.

Summing over all $T$ with $\vert T\vert=k$
yields a set of inequalities $B_{k}\geq 0$:
\begin{eqnarray}
B_{k}:=\!\!\!\!\!\!\!\!\!\sum_{\substack{T,S\subset\{1,\ldots,n\},}{\vert T\vert=k}}\!\!\!\!\!\!(-1)^{\vert S\cap \bar T\vert}\trace[\trace_{\bar S}(M)\trace_{\bar S}(N)]\geq0.
\end{eqnarray}
Choosing $M=N=\rho$, the right-hand side can be evaluated in terms of the sector lengths
to read \cite{calderbank1998quantum,rains1999quantum}
\begin{equation}
B_{k}=\frac{1}{2^{n}}\sum_{r=0}^{n}(-1)^{r}K_{k}(r;n)A_{r}\geq0\label{eq:shadowineq}
\end{equation}
with the Kravchuk polynomials 
\begin{eqnarray}
K_{k}(r;n)=\sum_{j=0}^{k}(-1)^{j}3^{k-j}\binom{r}{j}\binom{n-r}{k-j}.
\end{eqnarray}
For $k=0$, $B_0 = \frac{1}{2^n}[\sum_{j=0}^n (-1)^j A_j] \geq 0$ which is known in the context of state inversion \cite{hall2005multipartite}. Note that other references denote the inequalities $B_k$ by $S_k$. Here, we chose $B_k$ instead in order to avoid confusion with the linear entropy.
Using these inequalities, we are in position to prove the following bound:
\begin{prop}\label{prop:a3bound}
For all qubit states of $n\geq 5$, it holds that $A_{3}\leq\binom{n}{3}$. For $n=3$, the bound is given by $A_3 \leq 4$; for $n=4$, it is given by $A_3 \leq 8$. The bounds are tight.
\end{prop}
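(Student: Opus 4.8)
The plan is to reduce the whole statement to three base cases, $n\in\{3,4,5\}$, and then invoke Lemma~\ref{lem:l1} to propagate the bound upward. Since $A_3$ is convex, it suffices to prove each of the three bounds for \emph{pure} states. The engine in each base case is the same: for a pure $n$-qubit state the $\lceil n/2\rceil$ independent MacWilliams identities $M_m=0$ from (\ref{eq:schmidt}) eliminate all but a couple of the sector lengths; for $n\le 5$ only $A_1$ and $A_3$ remain free, with $A_2,A_4,\dots$ becoming affine functions of them. One then substitutes this parametrization into one carefully chosen shadow inequality $B_k\ge 0$ from (\ref{eq:shadowineq}) and checks that the dependence on $A_1$ cancels, leaving a clean upper bound on $A_3$.

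I would carry this out case by case. For $n=3$: $M_1=0$ forces $A_2=3$, so $M_0=0$ gives $A_3=4-A_1\le 4$, attained by $\ket{\text{GHZ}}$. For $n=4$: the identities $M_0=M_1=0$ (note $M_2\equiv 0$ is vacuous here) let one write $A_2$ and $A_4$ as affine functions of $A_1$ and $A_3$; feeding these into $B_0\ge 0$, i.e.\ $1-A_1+A_2-A_3+A_4\ge 0$, collapses to $A_1+A_3\le 8$, hence $A_3\le 8$, attained by $\ket{\chi}$ of (\ref{eq:5chi}). For $n=5$: $M_0=M_1=M_2=0$ express $A_2,A_4,A_5$ through $A_1,A_3$, and substituting into $B_1\ge 0$ (whose Kravchuk weights are $K_1(r;5)=15-4r$) makes all $A_1$-terms vanish, yielding $160-16A_3\ge 0$, i.e.\ $A_3\le 10=\binom{5}{3}$, attained by $\ket{0}^{\otimes 5}$ with $A_k=\binom{5}{k}$. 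Finally, having $A_3\le\binom{5}{3}$ for every $5$-qubit state, Lemma~\ref{lem:l1} gives $A_3\le\binom{n}{3}$ for all $n\ge 5$, with tightness again from $\ket{0\ldots0}$.

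The main obstacle is finding, for each base case, the right shadow inequality: a priori it is unclear which $B_k$ (or which nonnegative combination of the $B_k$ together with the $M_m$) will make the nuisance parameter $A_1$ drop out and leave precisely the desired coefficient on $A_3$. The pleasant fact that a single $B_k$ works for each of $n=3,4,5$ is what makes the argument short, but locating it requires either a brief search over $k$ or an LP-duality computation; once identified, the verification is only the linear-algebra manipulation sketched above. The only other point demanding care is bookkeeping — using exactly the $\lceil n/2\rceil$ genuinely independent MacWilliams relations and carrying the $2^{-n}$ normalization of (\ref{eq:shadowineq}) correctly — none of which is deep.
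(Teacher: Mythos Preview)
Your proposal is correct and follows essentially the same strategy as the paper: reduce to pure states by convexity, handle the base cases $n\in\{3,4,5\}$ via the MacWilliams identities combined with a single shadow inequality, then propagate by Lemma~\ref{lem:l1}. The only cosmetic difference is that for $n=3$ the paper uses $M_0$ together with $B_0$ (so that $A_2$ cancels upon subtraction) rather than your $M_0$ together with $M_1$, and for $n=4$ the paper subtracts $B_0$ directly from $M_0$ without invoking $M_1$; in both variants the outcome $A_1+A_3\le 2^{n-1}$ is identical, and your $n=5$ computation with $B_1$ matches the paper exactly.
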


\begin{proof}
For $n=3$ and $n=4$, we use a linear program that involves the purity $M_0 = 0$ from (\ref{eq:schmidt}) and state inversion inequality $B_0 \geq 0$. For $n=3$, these two equations read
\begin{eqnarray}
    1+A_1 + A_2 + A_3 &= 8, \\
    1-A_1 + A_2 - A_3 &\geq 0.
\end{eqnarray}
Subtracting the second inequality from the first and using $A_1\geq 0$, we obtain $A_3\leq 4$. The same  works for $n=4$.

For $n\geq 5$, we prove the statement for $n=5$. By use of Lemma~\ref{lem:l1}, the result
will then be true for larger $n$ as well.
We can assume that the total state is pure, as convex combinations
of pure states will never increase any sector length.
Using a linear program involving relations $M_j = 0$ for $j\in\{0,1,2\}$ from (\ref{eq:schmidt}), $B_{1} \geq 0$ reduces to $A_{3}\leq10=\binom{5}{3}$.

Concerning the tightness, consider the GHZ state for $n=3$ having $A_3 = 4$ and the state $\ket{\chi}$ for $n=4$, given in Eq.~(\ref{eq:5chi}). For $n\geq5$, consider any product state like $\ket{0}^{\otimes n}$ with sector lengths $A_k = \binom{n}{k}$.
\end{proof}

Numerically, a similar statement seems to hold for $A_{4}$ for states
of at least $8$ qubits, but using a linear program, one can show that shadow inequalities are insufficient to show it. Still, we conjecture:
\begin{conjecture}
\label{conj:conj1}For all $k$ there exists an $n_{0}$, such that
for all $n\geq n_{0}$, $A_{k}\leq\binom{n}{k}$ holds for states
of $n$-qubits.
\end{conjecture}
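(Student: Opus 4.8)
\medskip
\noindent\textbf{A strategy towards Conjecture~\ref{conj:conj1}.}
The plan is to exploit two facts already in hand. First, by Lemma~\ref{lem:l1} it suffices, for each fixed $k$, to exhibit a \emph{single} $n_0=n_0(k)$ with $A_k\leq\binom{n_0}{k}$ on all $n_0$-qubit states; the bound then propagates to every $n\geq n_0$. Second, by convexity of $A_k$ we may restrict to pure states, and then use the $\lceil n_0/2\rceil$ independent MacWilliams identities $M_m=0$ of (\ref{eq:schmidt}) to eliminate about half of the unknowns $A_0,\dots,A_{n_0}$. Since the product state $\ket{0}^{\otimes n_0}$ realises $A_j=\binom{n_0}{j}$ for \emph{every} $j$ at once, the conjecture says precisely that product states eventually become extremal for each sector.

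The first attack mirrors the proof of Proposition~\ref{prop:a3bound}: maximise $A_k$ over the polytope cut out by the identities $M_m=0$, the shadow inequalities $B_j\geq 0$ of (\ref{eq:shadowineq}), and $A_j\geq 0$, and certify the value $\binom{n_0}{k}$ by an explicit nonnegative combination of these constraints, ideally one whose coefficient pattern is uniform in $k$. As the remark after Proposition~\ref{prop:a3bound} notes, this already fails at $k=4$: that polytope has vertices with $A_4>\binom{n}{4}$, so linear information alone is insufficient and one must inject the positivity of the state itself.

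Concretely, one replaces the linear program by a semidefinite one. Writing $A_k(\rho)=2^{-n}\trace[(\sum_{\Xi_k}\Xi_k\otimes\Xi_k)(\rho\otimes\rho)]$ makes $A_k$ linear in the two-copy operator $\rho\otimes\rho$; relaxing $\rho\otimes\rho$ to an arbitrary operator on two copies of the $n$-qubit space that is positive semidefinite, normalised, swap-symmetric, positive under the partial transpose of one copy, and has equal marginals (and, at higher levels, imposing the analogous conditions on $\rho^{\otimes 3},\rho^{\otimes 4},\dots$) gives a hierarchy of SDP upper bounds on $\max_\rho A_k$. One expects, and for small $k$ this should be checkable numerically, that for $n\geq n_0(k)$ the first nontrivial level already equals $\binom{n}{k}$. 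Because the objective and all constraints are invariant under permuting the qubits, these SDPs block-diagonalise over the associated Terwilliger-type algebra --- exactly the mechanism behind the semidefinite bounds for quantum codes --- so a rigorous rational dual solution at $n=n_0(k)$ would settle the conjecture for that $k$.

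The principal difficulty is uniformity in $k$: each relaxation lives on an $n$-dependent space, and one must show that \emph{some} finite $n_0(k)$ always works rather than check cases one at a time. For the bare existence statement an asymptotic route looks more promising. Decompose $A_k(\rho)=\sum_{|S|=k}a_k(\rho_S)$, where $a_k(\rho_S)$ is the top ($k$-body) sector of the $k$-qubit reduction $\rho_S$, and note $a_k(\rho_S)\leq M_k$ for the finite constant $M_k=\max_\sigma a_k(\sigma)$. Since the product state attains $a_k(\rho_S)=1$ for every $S$, it is enough to show that the total excess $\sum_{|S|=k}\bigl(a_k(\rho_S)-1\bigr)_+$ is $o(\binom{n}{k})$ as $n\to\infty$ with $k$ fixed. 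A reduction can have $a_k(\rho_S)>1$ only if it carries strong genuine $k$-body correlation, which by monogamy --- say by bounding $a_k(\rho_S)-1$ in terms of a linear-entropy quantity across the cuts $S\,\vert\,\bar S$ and using that the sum of such quantities over all $S$ is controlled by the global purity $\sum_j A_j\leq 2^n$ --- can occur for only a vanishing fraction of the $\binom{n}{k}$ subsets. Turning this into an estimate that actually beats $\binom{n}{k}$ is the crux, and it is precisely where the averaged shadow inequalities and the SDP relaxations above would re-enter as the technical engine.
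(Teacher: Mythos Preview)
This statement is a \emph{conjecture} in the paper; the paper offers no proof, only the remark that shadow inequalities (hence the linear program) are already insufficient for $k=4$. Your submission is correspondingly labelled a ``strategy'' rather than a proof, and you are candid about the missing steps, so there is no discrepancy to report at the level of what is claimed.

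On the content of the strategy: your first paragraph correctly reproduces the paper's own reasoning (Lemma~\ref{lem:l1} plus convexity plus MacWilliams), and your observation that the LP fails at $k=4$ is exactly what the paper states after Proposition~\ref{prop:a3bound}. The SDP relaxation on two copies is a natural next move and is in the spirit of the paper's proof of Theorem~\ref{thm:ssa}; as you say, the obstacle is that one would have to certify a dual solution for \emph{some} $n_0(k)$ for every $k$, and nothing in the outline suggests how to do this uniformly.

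The asymptotic paragraph, however, contains a concrete gap that you should be aware of. You propose to control $\sum_{|S|=k}(a_k(\rho_S)-1)_+$ by a linear-entropy quantity whose total over all $S$ is in turn bounded via the global purity $\sum_j A_j\leq 2^n$. But $2^n$ is exponentially larger than $\binom{n}{k}$ for fixed $k$, so any bound routed through the global purity is far too weak to yield an $o\!\left(\binom{n}{k}\right)$ estimate. Moreover, large $a_k(\rho_S)$ is associated with \emph{high} purity of $\rho_S$ in its top sector, not with large linear entropy of $\rho_S$, so the proposed bridge ``$a_k(\rho_S)-1$ bounded by a linear-entropy quantity across $S\,\vert\,\bar S$'' does not obviously exist; one would rather need a genuine monogamy statement limiting how many $k$-subsets can simultaneously carry near-maximal $k$-body correlation, and no such statement is supplied. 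In short, the outline correctly identifies the shape of the problem and the failure of the linear relaxation, but neither the SDP route nor the asymptotic route is brought to the point where one can see that a finite $n_0(k)$ must exist.
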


\subsection{\texorpdfstring{Bounds on $A_n$}{Bounds on An}}\label{sec:boundsan}

Finally, we look at the full-body correlations of states, i.e.~$A_{n}$
of an $n$-qubit state. Lower bounds on this quantity can be used
to detect entanglement \cite{klockl2015characterizing,tran2016correlations}.
Upper bounds, however, are so far only known for the case of odd $n$
\cite{tran2016correlations}. In that case, combining again the purity $M_0 = 0$ from (\ref{eq:schmidt}) and state inversion inequality $B_0 \geq 0$ from (\ref{eq:shadowineq}) yields for odd $n$
\begin{eqnarray}
2^{n-1}\geq\!\!\!\!\!\!\sum_{k\text{~odd},~k\leq n}\!\!\!\!\!\!A_{k}\geq A_{n}.
\end{eqnarray}

For example, the $n$-partite GHZ state for odd $n$
fulfills $A_{n}=2^{n-1}$, thus this bound is tight.

For $n$ even, this trick does not work. In this case, the GHZ state
fulfills $A_{n}=2^{n-1}+1$, which is why it was conjectured in \cite{tran2016correlations}
that this is the upper bound. Here, we show that this is true at least
up to $n=10$.

For small $n$, this follows from the shadow inequality $B_{1}$ in (\ref{eq:shadowineq}).
Evaluating $B_{1}\geq 0$ for $n=2$ yields
\begin{eqnarray}
A_{2}\leq3=2^{2-1}+1,
\end{eqnarray}
which is the well known bound on the two-body correlations in two-qubit
states and is compatible with the conjecture. For $n=4$, $B_{1}\geq0$
yields 
\begin{eqnarray}
A_{4}\leq3-2A_{1}+A_{2}\leq3+\binom{4}{2}=2^{4-1}=9,
\end{eqnarray}
where we used the result of Proposition~\ref{prop:c1}. For higher $n$,
we observe that for every state $\rho$, there exists another state
$\hat{\rho}=\frac{1}{2}(\rho+Y^{\otimes n}\rho^{\text{T}}Y^{\otimes n})$
with the same even correlations $P_{2k}$ and vanishing odd correlations
$P_{2k+1}$ \cite{Wyderka2018}. Thus, the bounds on an even sector length can be obtained
by setting w.l.o.g.~the odd correlations to zero, i.e.~$A_{2k+1}=0$.

For $n=6$, we investigate $B_{1}\geq 0$ and $B_{3}\geq 0$ and combine them to
eliminate $A_{4}$. This yields, using Proposition~\ref{prop:c1} again,
\begin{eqnarray}
A_{6}\leq18+A_{2}\leq33.
\end{eqnarray}
For $n=8$, we combine $B_{1}$, $B_{3}$ and $B_{5}$ to yield the
bound, for $n=10$ we combine $B_k$ for $k=1,3,5,7$:

\begin{table}[t]

\caption{Translation of the complete sets of sector bounds of two- and three-qubit states into
linear entropy and mutual entropy inequalities. The trivial bounds
$A_{j}\geq0$ are omitted. The translation among the representations
is given by (\ref{eq:AtoS}) - (\ref{eq:StoI}). The constraints are due to purity, state inversion [$B_0\geq0$ from (\ref{eq:shadowineq})], Schmidt decomposition [(\ref{eq:schmidt})] and symmetric strong subadditivity (SSSA, Thm.~\ref{thm:ssa}). \label{tab:constraints}}

\begin{tabular}{c|cccc}
\br
$n$ & Constraint & Sector length & Linear entropy & Mutual entropy\tabularnewline
\mr
2 & Purity & $A_{1}+A_{2}\leq3$ & $S_\text{L}^{(2)}\geq0$ & $I_\text{L}^{(2)}\leq I_\text{L}^{(1)}$\tabularnewline

2 & State inv. & $A_{1}-A_{2}\leq1$ & $S_\text{L}^{(2)}\leq S_\text{L}^{(1)}$ & $I_\text{L}^{(2)}\geq0$\tabularnewline
\mr
3 & Purity & $A_{1}+A_{2} + A_{3} \leq7$ & $S_\text{L}^{(3)}\geq0$ & $I_\text{L}^{(3)}\geq I_\text{L}^{(2)} -I_\text{L}^{(1)}$\tabularnewline

3 & State inv. &$A_{1}-A_{2}+A_{3} \leq1$ & $S_\text{L}^{(3)}\geq S_\text{L}^{(2)} -S_\text{L}^{(1)}$ &  $I_\text{L}^{(3)}\geq0$\tabularnewline

3 & Schmidt dec. & $A_{2}\leq3$ & $S_\text{L}^{(2)}\geq S_\text{L}^{(1)}$ & $I_\text{L}^{(2)}\leq I_\text{L}^{(1)}$\tabularnewline

3 & SSSA & $A_{1}+A_{2}\leq3(1 + A_{3})$ & $3S_\text{L}^{(3)}\leq 2S_\text{L}^{(2)} -S_\text{L}^{(1)}$ & $I_\text{L}^{(3)}\leq\frac{1}{3}I_\text{L}^{(2)}$\tabularnewline
\br
\end{tabular}

\end{table}

\begin{thm}
For $n$-qubit states with $n\leq10$, $n$ even, it holds that $A_{n}\leq2^{n-1}+1$.
The bound is tight.
\end{thm}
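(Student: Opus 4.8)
The plan is to handle each even $n\in\{2,4,6,8,10\}$ separately, combining the shadow inequalities $B_k\geq0$ from (\ref{eq:shadowineq}) with Proposition~\ref{prop:c1}. The crucial preliminary step is the reduction, already noted above, to states with vanishing odd correlations: passing from $\rho$ to $\hat\rho=\frac12(\rho+Y^{\otimes n}\rho^{\text{T}}Y^{\otimes n})$ leaves $A_n$ and all even $A_{2k}$ unchanged while forcing $A_1=A_3=\cdots=0$, so it suffices to bound $A_n$ on this restricted set. On it the small cases are immediate: for $n=2$, $B_1\geq0$ reads $A_2\leq3=2^{1}+1$; for $n=4$, $B_1\geq0$ gives $A_4\leq3-2A_1+A_2\leq3+\binom42=9=2^{3}+1$ after using $A_1\geq0$ and $A_2\leq\binom42$ from Proposition~\ref{prop:c1}.

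For $n\in\{6,8,10\}$ I would evaluate the Kravchuk polynomials $K_k(r;n)$ and, on the restricted set, read each $B_k\geq0$ as a linear inequality in $A_0=1,A_2,A_4,\ldots,A_n$. One then searches for a combination of the odd-index inequalities $B_1,B_3,\ldots,B_{n-3}$ with nonnegative coefficients whose net coefficients on $A_4,A_6,\ldots,A_{n-2}$ all vanish; the surviving inequality has the form $c\,A_n\leq a+b\,A_2$ with $a,b,c\geq0$, and inserting $A_2\leq\binom n2$ yields $A_n\leq(a+b\binom n2)/c$, which one checks equals $2^{n-1}+1$. For $n=6$ the explicit combination is $2B_1+B_3\geq0$, which gives $A_6\leq18+A_2\leq33=2^{5}+1$; the cases $n=8$ (combining $B_1,B_3,B_5$) and $n=10$ (combining $B_1,B_3,B_5,B_7$) are analogous finite computations, most conveniently carried out as a small linear program.

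For tightness, the $n$-qubit GHZ state $\ket{\text{GHZ}_n}=\frac1{\sqrt2}(\ket0^{\otimes n}+\ket1^{\otimes n})$ attains the bound for even $n$: its diagonal part $\frac1{2^{n+1}}[(\one+Z)^{\otimes n}+(\one-Z)^{\otimes n}]$ contributes exactly the single full-body term $Z^{\otimes n}$ (here $n$ even is needed), while its off-diagonal part $\frac1{2^{n+1}}[(X+iY)^{\otimes n}+(X-iY)^{\otimes n}]$ contributes the $2^{n-1}$ full-body $X/Y$-strings with an even number of $Y$'s, each with coefficient $\pm1$; hence $A_n=2^{n-1}+1$. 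The heart of the argument is finding, for each $n\in\{6,8,10\}$, a nonnegative combination of odd-index shadow inequalities that cancels every intermediate even sector yet retains a nonnegative coefficient on $A_2$. This is precisely the property that fails for larger $n$ (as already observed for $A_4$), so the ceiling $n\leq10$ is intrinsic to this method; going beyond it would require genuinely new constraints.
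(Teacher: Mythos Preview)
Your proposal is correct and follows essentially the same approach as the paper: reduce to states with vanishing odd sectors via $\hat\rho$, then for each even $n\leq10$ combine the odd-index shadow inequalities $B_1,B_3,\ldots,B_{n-3}$ (with nonnegative weights chosen to eliminate the intermediate even sectors) and finish using $A_2\leq\binom{n}{2}$ from Proposition~\ref{prop:c1}; tightness comes from the GHZ state. Your write-up is in fact a bit more explicit than the paper's---you give the concrete combination $2B_1+B_3$ for $n=6$ and spell out the GHZ sector-length count---but the strategy is identical.
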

If Conjecture~\ref{conj:conj1} is true for $k=4$ and $n_{0}\leq12$,
as numerical calculation indicates, then the same method works for $n=12$, $n=14$,
$n=16$ as well.

\subsection{Application to entanglement detection}

Before continuing, we highlight some applications of the bounds found in this section to the detection of entanglement. As mentioned before, sector lengths are convex and invariant under local unitaries, making them useful for entanglement detection \cite{klockl2015characterizing}.
This can be exploited by noticing that for product states $\rho = \rho_A \otimes \rho_B$, where $\rho_A$ consists of $n_A$ and $\rho_B$ of $n_B$ particles, it holds that
\begin{eqnarray}\label{eq:productstates}
    A_k(\rho_A \otimes \rho_B) = \sum_{j=0}^{k} A_j(\rho_A)A_{k-j}(\rho_B),
\end{eqnarray}
where we set $A_k(\rho) = 0$ if $k$ exceeds the number of particles in $\rho$.

For $n_A = n_B = 1$, $A_{2}(\rho_{A} \otimes \rho_{B})=A_{1}(\rho_{A})A_{1}(\rho_{B})$. Due to convexity of the sector lengths, it follows that $A_{2}\leq1$ for all separable states.

For more than two parties, different entanglement structures occur. A multi-partite state $\rho$ is said to be biseparable, iff it can be written as
\begin{eqnarray}
    \rho = \sum_i p_i \rho_{A_i} \otimes \rho_{B_i},
\end{eqnarray}
where $\sum_i p_i = 1$ and the $A_i,B_i$ denote some bipartition of the parties, i.e. $A_i \dot\cup B_i = \{1,\ldots,n\}$.
A state is called genuinely multipartite entangled (GME), iff it is not biseparable.

For $n=3$, we showed that $A_{3}\leq4$, on the other hand, all biseparable
states obey $A_{3}\leq3$, as for states $\rho=\rho_{A}\otimes\rho_{BC}$
it holds that $A_{3}(\rho)=A_{1}(\rho_{A})A_{2}(\rho_{BC})\leq3$.
Therefore, also in this case, the highest sector length can be used
to detect genuine multipartite entanglement.

For $n=4$, however, the situation is different: One can show with the same argument as above that
biseparable states fulfill $A_{4}\leq9$. But, as seen before, $A_{4}\leq9$ is already the bound for
all states. Thus, $A_{4}$ does not allow for detection of genuine
multipartite entanglement. However, there is a nontrivial bi-separability
bound on $A_{3}$ of $7$, whereas  the bound of Proposition~\ref{prop:a3bound} due to positivity of the state is given by $A_{3}\leq8$. Therefore, not the highest, but the next-to-highest
correlations allow for entanglement detection. This already yields an entanglement criterion which can detect states not detectable by known criteria using the sector lengths \cite{klockl2015characterizing}, an example being again the highly entangled state from Eq.~(\ref{eq:5chi}) with sector length configuration $(A_1,A_2,A_3,A_4) = (0,2,8,5)$.
Note that it is known that even vanishing highest order correlations do not exclude genuine multipartite entanglement
 \cite{kaszlikowski2008quantum,laskowski2012incompatible,tran2017genuine,klobus2018higher}.
Finally, let us note that while sector lengths are quadratic expressions in the quantum state, 
the additional knowledge of similar quantities of higher order, i.e.~higher moments, allows for more refined entanglement detection \cite{ketterer2019characterizing}.

\section{Bounds on linear combinations of sector lengths}

We now turn to the problem of finding bounds on linear combinations of sector lengths. This is related to the question of whether linear constraints are enough to fully characterize the set, meaning that the set of states forms a polytope in the sector length picture.
As mentioned before, sector lengths are in one-to-one correspondence with linear entropies and the mutual entropy for linear entropies. It turns out that some of the obtained inequalities are easier understood in the language of linear entropies.

\subsection{Translation into entropy inequalities}\label{sec:translation}

The linear entropy of a state $\rho$ is defined as
$S_\text{L}(\rho)=2[1-\trace(\rho^{2})]$. As $\trace(\rho^2)$, the purity of $\rho$, is up to a factor equal to the sum of all sector lengths of $\rho$, we can express $S_\text{L}$ in terms of sector lengths.
We define the sector entropy of sector $k$ by summing over all linear entropies of reduced states of $k$ particles, i.e.
\begin{eqnarray}
    S_\text{L}^{(k)}&:=\sum_{\substack{K\subset\{1,\ldots,n\}}{  \vert K\vert=k}}S_\text{L}(\rho_{K}) \nonumber \\
&=\frac{1}{2^{k-1}}\left[\binom{n}{k}2^{k}-\sum_{j=0}^{k}\binom{n-j}{k-j}A_{j}\right],
\end{eqnarray}
which can be inverted to yield
\begin{eqnarray}\label{eq:AtoS}
A_{k}=\binom{n}{k}-\sum_{j=1}^{k}(-1)^{k-j}2^{j-1}\binom{n-j}{k-j}S_\text{L}^{(j)}.
\end{eqnarray}
Furthermore, it will be useful to define the $k$-partite mutual linear entropy, 
\begin{eqnarray}\label{eq:StoI}
    I_\text{L}^{(k)} := \sum_{j=1}^k (-1)^{j-1} \binom{n-j}{k-j}S_\text{L}^{(j)}.
\end{eqnarray}
For $k=2$ and $n=2$, it resembles the usual mutual entropy, $I_\text{L}^{(2)} = S_\text{L}(\rho_A) + S_\text{L}(\rho_B) - S_\text{L}(\rho_{AB})$.
Note that the definition is analogous to the mutual information of von~Neumann-entropy. However, in the case of linear entropy, the name mutual linear entropy is preferred, as the quantity is not additive and does not vanish for product states \cite{gour2007dual}. 
Table~\ref{tab:translation} lists the non-trivial bounds on the sector lengths found above, translated into the two other representations.

\begin{figure}[t]
    \centering
    \begin{tikzpicture}
    \begin{axis}[xmax=3.2,ymax=3.2,ymin=-0.2, samples=50,grid=both, xlabel=$A_1$, ylabel=$A_2$]
      \addplot[name path=f, blue, ultra thick][domain=0:2] (x,3-x);
      \node[rotate=-42] at (axis cs: 1.1,2.1) {purity bound};
    
      \path[name path=axis] (axis cs:0,0) -- (axis cs:1,0) -- (axis cs:2,1);
      \addplot [thick, color=blue, fill=blue, fill opacity=0.05]
        fill between[of=f and axis];
    
      \addplot[name path=sep, blue, ultra thick, dashed][domain=0:2] (x,1);
      \addplot[name path=sep, blue, thick, dotted][domain=1:2] (x,x-1);
      \node[rotate=42] at (axis cs: 1.4,0.55) {state inv.};
      \node[] at (axis cs: 0.7,1.5) {entangled};
      \node[right] at (axis cs: 2,1) {$\ket{00}$};
      \node[right] at (axis cs: 0,3) {$\,\,\,\ket{\Phi^+}$};
      \node[right] at (axis cs: 1,0) {$\,\,\,\frac12(\ketbra{00}{00}+\ketbra{01}{01})$};
    \end{axis}      
    \end{tikzpicture} 
    \caption{The total set of attainable pairs $A_1$ and $A_2$ in two-qubit states, displayed in light blue. }
    \label{fig:twoqubits}
\end{figure}
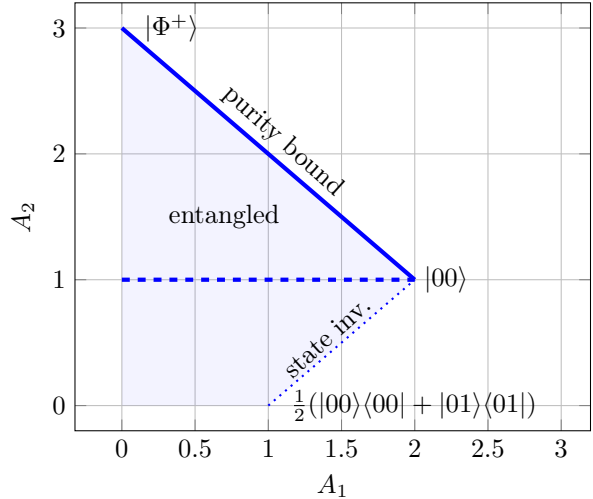

\subsection{Characterization of two- and three-qubit states}

Using the results above, we can now characterize the allowed values of sector length 
tuples $(A_1,\ldots,A_n)$ for two-qubit and three-qubit states. It turns out that in both
cases the set of admissible values is a convex polytope. This is interesting as 
the convexity is not trivial, because the sector lengths are nonlinear in the state. 
In addition, it is surprising that only a finite number of linear constraints 
corresponding to the surfaces of the polytope is sufficient for a full description.
This reminds of a similar polytope for separable states, if variances of collective
spin-observables are considered \cite{toth2009spin}.

\subsubsection{The case of two qubits}\label{sec:twoqubits}

It is easy to verify that in the case of $n=2$, pure product states obey $A_{1}=2$ and $A_{2}=1$ [see (\ref{eq:productstates})]. The Bell state $\ket{\Phi^+} = \frac{1}{\sqrt2} (\ket{00} + \ket{11})$ obeys $A_{1}=0$, $A_{2}=3$. The purity bound $\trace(\rho^2) \leq 1$ translates into $1+A_{1}+A_{2}\leq4$. By superposing a pure product state and the Bell state, one can obtain
pure states with $A_{1}\in[0,2]$ and $A_{2}=3-A_{1}$. Exceeding
the value of 2 for $A_{1}$ is impossible due to the bound $A_1\leq n$ from Eq.~(\ref{eq:bounda1}).

However, the state inversion bound $B_0\geq0$ from (\ref{eq:shadowineq}) yields
another bound on $A_{1}$ and $A_{2}$ due to positivity; namely $A_{1}-A_{2}\leq1$.
States reaching this bound are given by the family $(1-p)\ketbra{00}{00}+p\ketbra{01}{01}$. All other states can be reached by mixing the boundary states with the maximally mixed state $\one/4$, as these states lie on a straight line connecting the boundary state with the origin. This yields the whole set of admissible pairs of $A_1$ and $A_2$ and is displayed in figure~\ref{fig:twoqubits}.

\begin{figure}[t]
    \centering
    \includegraphics[width=0.7\columnwidth]{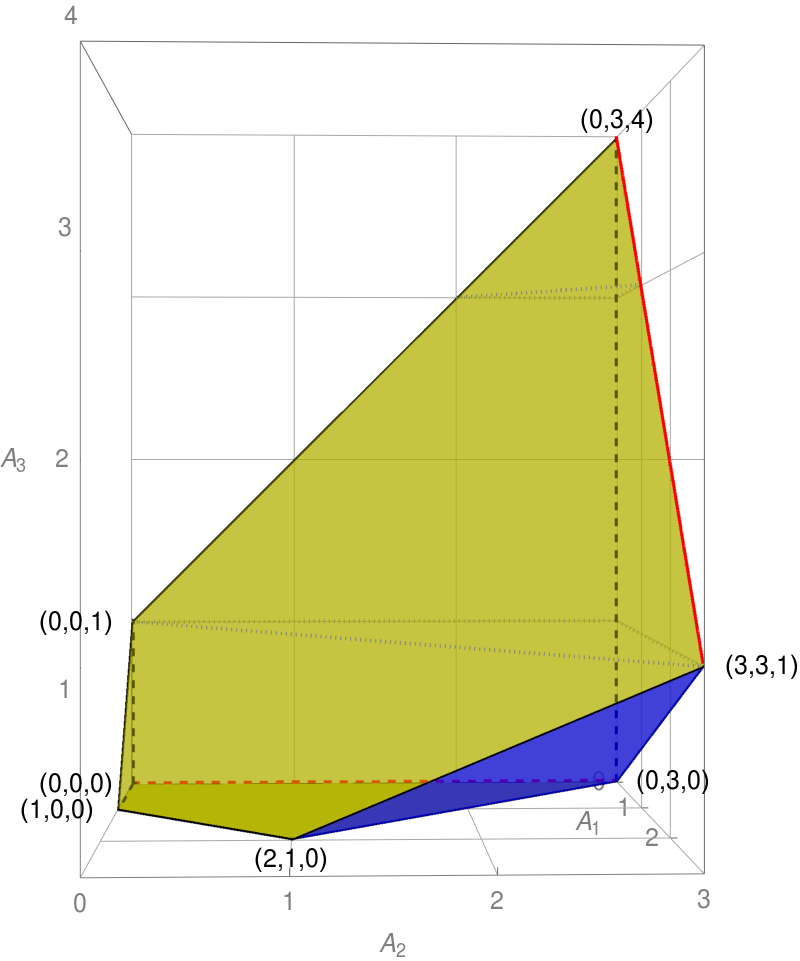}
    \caption{The polytope of admissible sector length configurations of three-qubit states. The yellow surface corresponds to the state inversion bound $B_0 \propto 1 - A_1 + A_2 - A_3 \geq 0$, the blue surface originates from symmetric strong subadditivity [(\ref{eq:sssa})]. Pure states lie on the red solid line connecting $(3,3,1)$ (product states) and $(0,3,4)$ (GHZ state). The $A_2$-axis is displayed by a red dashed line. 
    States above the the lower gray dotted line are not fully separable, states above the upper gray dotted line are genuinely multipartite entangled.
    The surface of the polytope is displayed in figure~\ref{fig:yellowsurface}.}
    \label{fig:threequbits}
\end{figure}

\subsubsection{The case of three qubits}\label{sec:threequbits}

While all the bounds in the case of two qubits are known, the case of three qubits shows an interesting new result that is connected to strong subadditivity of linear entropy. 

We start by collecting all inequalities that we know: The state inversion bound $B_0\geq 0$ from Eq.~(\ref{eq:shadowineq}), the bound $A_1 \leq 3$, the shadow inequality $B_1\geq 0$ and the bound from Proposition~\ref{prop:c1} yield a set of four inequalities,
\begin{eqnarray}
    1 - A_1 + A_2 - A_3 & \geq 0, \qquad A_1 \leq 3, \label{eq:poly1} \\
    9 - 5 A_1 + A_2 + 3A_3 & \geq 0, \qquad A_2 \leq 3, \label{eq:poly3}
\end{eqnarray}
from which the bound $\trace(\rho^2)\leq 1$ can be obtained using a linear program. These inequalities define a polytope in the three-dimensional space of tuples $(A_1,A_2,A_3)$.

However, as numerical search indicates, these bounds are not tight. As it turns out, there is a single additional linear constraint replacing the constraint $B_1\geq0$.

\begin{thm}\label{thm:ssa}
For $3$-qubit states, it holds that
\begin{eqnarray}
    A_1 + A_2 \leq 3(1+A_3). \label{eq:sssa}
\end{eqnarray}
\end{thm}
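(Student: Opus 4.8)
The plan is to translate the inequality $A_1+A_2\le 3(1+A_3)$ into a statement about linear entropies — as Table~\ref{tab:constraints} suggests, it should become $3S_\text{L}^{(3)}\le 2S_\text{L}^{(2)}-S_\text{L}^{(1)}$, i.e. a symmetrized version of strong subadditivity for the linear entropy. Concretely, writing $S_\text{L}(\rho_K)$ for the linear entropy of the $k$-party marginal on $K$, one has $S_\text{L}^{(1)}=\sum_i S_\text{L}(\rho_i)$, $S_\text{L}^{(2)}=\sum_{i<j}S_\text{L}(\rho_{ij})$, and $S_\text{L}^{(3)}=S_\text{L}(\rho)$ for a three-qubit state; using the conversion formulas~(\ref{eq:AtoS}) one checks that~(\ref{eq:sssa}) is equivalent to
\begin{equation}
2\bigl[S_\text{L}(\rho_{12})+S_\text{L}(\rho_{13})+S_\text{L}(\rho_{23})\bigr] \;\ge\; 3\,S_\text{L}(\rho) + \bigl[S_\text{L}(\rho_1)+S_\text{L}(\rho_2)+S_\text{L}(\rho_3)\bigr].
\end{equation}
This is the average, over the three choices of the ``middle'' and ``outer'' systems, of the would-be SSA inequality $S_\text{L}(\rho_{AB})+S_\text{L}(\rho_{BC})\ge S_\text{L}(\rho_{ABC})+S_\text{L}(\rho_B)$; since ordinary SSA fails for the linear entropy, the content of the theorem is that the symmetrized sum survives.

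Next I would reduce to pure states. By convexity of sector lengths (used already in the excerpt), it suffices to verify~(\ref{eq:sssa}) on pure three-qubit states, or more carefully: the extreme points of the admissible region are attained by pure states, so I would argue that the linear functional $3(1+A_3)-A_1-A_2$ is minimized on pure states — but $A_3$ is itself convex, so one must be a little careful, and the cleanest route is to check the inequality directly for arbitrary $\rho$ via an SDP/duality argument rather than restricting to pure states. So the main step is to set up the semidefinite program: the variables are the (real) Bloch coefficients of $\rho$ subject to $\rho\succeq 0$, the objective is $3+3A_3-A_1-A_2$, and one wants to show the minimum is $\ge 0$. Because $A_1,A_2,A_3$ are quadratic in the Bloch coefficients, this is not literally an SDP in those variables, but it becomes one after noting that $A_k$ is a linear function of the entries of $\rho\otimes\rho$ (equivalently of the two-copy state, or of the purities $\trace(\rho_S^2)$), so one lifts to the symmetric subspace of two copies and obtains a genuine SDP; alternatively one uses the known parametrization of three-qubit states and runs the SDP numerically, then extracts a rigorous dual certificate.

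The cleanest rigorous path, which I expect the authors take, is to exhibit an explicit dual certificate: find positive semidefinite operators (or a positive combination of shadow-type inequalities and purity/Schmidt relations, possibly tensored with $\rho$) whose sum equals $3+3A_3-A_1-A_2$ plus a manifestly nonnegative remainder. That is, one writes
\begin{equation}
3(1+A_3)-A_1-A_2 \;=\; \sum_\ell c_\ell\, \trace\!\bigl[(\rho\otimes\rho)\,X_\ell\bigr] \quad\text{with } c_\ell\ge 0,\ X_\ell\succeq 0 \text{ on } \mathcal{H}\otimes\mathcal{H},
\end{equation}
where the $X_\ell$ are built from projectors onto symmetric/antisymmetric subspaces of subsets of the six qubits (two copies of three) — precisely the structure underlying the shadow inequalities~(\ref{eq:preshadow}). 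Since~(\ref{eq:sssa}) is \emph{not} among the shadow inequalities $B_k\ge0$ (the excerpt stresses it \emph{replaces} $B_1\ge0$), the certificate must use a genuinely larger cone, e.g. swap operators acting within a single copy combined across copies, so finding the right $X_\ell$ is the crux.

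The hard part will be precisely this: identifying the positive-semidefinite certificate, since the inequality lies strictly outside the span of the ``obvious'' constraints (purity, state inversion, MacWilliams/Schmidt, Proposition~\ref{prop:c1}) and the naive two-copy shadow inequalities. Once the SDP is solved numerically and the dual optimal operators are guessed in closed form, verifying the identity is a finite linear-algebra check in the $64$-dimensional two-copy space (or, after exploiting permutation symmetry among the three parties and the two copies, a much smaller computation). I would also separately verify tightness by exhibiting states on the boundary face $A_1+A_2=3(1+A_3)$ — the GHZ state $(0,3,4)$ and the product state $(3,3,1)$ both saturate it, consistent with the red line of pure states in Figure~\ref{fig:threequbits} lying on this facet — which confirms the constraint is a genuine facet of the polytope rather than a redundant bound.
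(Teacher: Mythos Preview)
Your translation into linear entropies and the idea of lifting to two copies are both correct and match the paper's setup: the quantity $3(1+A_3)-A_1-A_2$ is indeed $2^3\trace[(\rho\otimes\rho)\,\eta^{\text{T}_A}]$ for an explicit operator $\eta^{\text{T}_A}$ on the six-qubit space. However, your proposed certificate --- writing this operator as $\sum_\ell c_\ell X_\ell$ with $c_\ell\ge 0$ and $X_\ell\succeq 0$ on $\mathcal{H}\otimes\mathcal{H}$ --- cannot exist. The paper computes that $\eta^{\text{T}_A}$ has a negative eigenvalue $-3/2$, and this survives even after restricting to the symmetric subspace (otherwise the symmetry constraint alone would finish the SDP). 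So the inequality is \emph{not} a positive-semidefinite witness on two copies, and no amount of swap-operator combinatorics within the PSD cone will produce the decomposition you describe. Your hedge about a ``genuinely larger cone'' is pointing in the right direction, but you do not identify what that cone is.

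The missing idea is that $\eta^{\text{T}_A}$ is nonnegative only on \emph{product} inputs $\rho\otimes\rho$, and the paper proves this by running the \emph{primal} SDP over six-qubit states $\sigma$ subject to relaxations of the product structure borrowed from entanglement theory: positivity under partial transpose, positivity under the Breuer--Hall map with $U=YYY$, and a linearization of $\langle A\otimes A\rangle=\langle A\rangle^2$ for $A=X\one\one$, on top of the permutation and local-unitary symmetries of $\eta^{\text{T}_A}$. Only with these separability-type constraints does the minimum of $\trace(\sigma\eta^{\text{T}_A})$ come out to zero. The dual certificate therefore lives in the cone generated by PPT witnesses and Breuer--Hall witnesses, not in the PSD cone you invoke. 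A smaller point: the GHZ state does \emph{not} saturate~(\ref{eq:sssa}) --- at $(0,3,4)$ one has $A_1+A_2=3$ versus $3(1+A_3)=15$ --- and in fact the entire facet is reached by mixed states of the form $\rho_D$ given in Appendix~D; only the product-state vertex $(3,3,1)$ among pure states lies on it.
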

The proof is given in Appendix B and uses a semidefinite program for a relaxed version of the problem.
The polytope defined by (\ref{eq:poly1})-(\ref{eq:sssa}) is displayed in figure~\ref{fig:threequbits} and figure~\ref{fig:yellowsurface} in the Appendix.

It remains to show that the obtained polytope is tight by showing the existence of states for every point in the polytope, including the boundary. In fact, it suffices to find states on the yellow and the blue surface in figure~\ref{fig:threequbits}, corresponding to the state inversion bound $1-A_1+A_2-A_3 \geq 0$ and the bound $A_1 + A_2 \leq 3(1+A_3)$ from Theorem~\ref{thm:ssa}. This follows from the observation that for every state $\rho$, also the state inversion $\tilde{\rho} := Y^{\otimes n}\rho^{\text{T}} Y^{\otimes n}$ is a proper state, with the same coefficients in the Bloch decomposition up to a minus sign for all coefficients of an odd number of Pauli operators \cite{Wyderka2018}. Thus, the family $\rho(p) = p\rho + (1-p)\tilde\rho$ corresponds to states with sector lengths $((1-2p)^2 A_1(\rho), A_2(\rho), (1-2p)^2 A_3(\rho))$, yielding a family of states lying on a straight line connecting a point in the polytope with the point $(0, A_2, 0)$ on the red dashed $A_2$-axis with the same value of $A_2$. Therefore, states filling the yellow and the blue surface and their straight-line connections to the $A_2$-axis fill the whole polytope.

We find and list these boundary states explicitly in Appendix D, where we also display the net of the polytope.

\subsubsection{Connection to strong subadditivity}\label{sec:sssa}

Theorem~\ref{thm:ssa} is closely related to strong subadditivity (SSA). One formulation of SSA for the specially chosen particle $B$ is $S(\rho_{ABC}) + S(\rho_B) \leq S(\rho_{AB}) + S(\rho_{BC})$. However, it  holds for the von~Neumann entropy  only and fails to hold for the linear entropy, a counterexample being the state $\ket{\Phi^{+}}\bra{\Phi^{+}}\otimes\frac{\one}{2}$. Nevertheless, summing SSA over all particles to symmetrize it, yields
\begin{eqnarray}\fl
    \phantom{I}3S_\text{L}(\rho_{ABC}) + S_\text{L}(\rho_A) +  S_\text{L}(\rho_B) +  S_\text{L}(\rho_C) \leq 2[S_\text{L}(\rho_{AB})+S_\text{L}(\rho_{AC})+S_\text{L}(\rho_{BC})],
\end{eqnarray}
or in our language,
\begin{eqnarray}
    3S_\text{L}^{(3)} + S_\text{L}^{(1)} \leq 2S_\text{L}^{(2)}.
\end{eqnarray}
This is, using the correspondence (\ref{eq:AtoS}), equivalent to the statement of Theorem~\ref{thm:ssa}. Thus, linear entropy for three-qubit states obeys a symmetric SSA, which implies that usual SSA holds for at least one choice of special particle. Another formulation in terms of mutual linear entropies yields the inequality $I_\text{L}^{(3)} \leq \frac{1}{3}I_\text{L}^{(2)}$.

We state the full set of restrictions for $n=2$ and $n=3$ in all three representations in table~\ref{tab:constraints}.

Finally, note that the statement of Theorem~\ref{thm:ssa} can be generalized to states of more particles using the same induction trick as in the proof of Lemma~\ref{lem:l1}. We get:
\begin{cor}
For $n$-qubit states with $n\geq3$, it holds that $I_\text{L}^{(3)} \leq \frac{n-2}{3}I_\text{L}^{(2)}$.\label{prop:i3bound}
\end{cor}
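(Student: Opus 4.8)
The plan is to deduce the corollary from Theorem~\ref{thm:ssa} by applying it to every three-qubit marginal and then counting multiplicities, exactly in the spirit of the marginal-summation argument used for Lemma~\ref{lem:l1}. The crucial preliminary observation is that, despite the $n$-dependent binomial prefactors in the definition~(\ref{eq:StoI}), both $I_\text{L}^{(2)}$ and $I_\text{L}^{(3)}$ are \emph{additive over subsystems}: for a subset $K\subseteq\{1,\ldots,n\}$ of size $k$ define the intrinsic mutual linear entropy
\begin{equation}
I_\text{L}(\rho_K) := \sum_{\emptyset\neq L\subseteq K}(-1)^{|L|-1}S_\text{L}(\rho_L),
\end{equation}
which depends only on the reduced state $\rho_K$ and, for a genuine $k$-qubit system, coincides with $I_\text{L}^{(k)}$ evaluated on that system. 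First I would verify, using that a fixed $j$-element set lies inside exactly $\binom{n-j}{k-j}$ sets of size $k$, the identity $\sum_{|K|=k}I_\text{L}(\rho_K)=\sum_{j=1}^{k}(-1)^{j-1}\binom{n-j}{k-j}S_\text{L}^{(j)}=I_\text{L}^{(k)}$; in particular $I_\text{L}^{(2)}=\sum_{\{i,j\}}I_\text{L}(\rho_{ij})$ and $I_\text{L}^{(3)}=\sum_{\{i,j,k\}}I_\text{L}(\rho_{ijk})$.

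Next I would invoke Theorem~\ref{thm:ssa} in the form $I_\text{L}^{(3)}\leq\frac{1}{3}I_\text{L}^{(2)}$ derived in Section~\ref{sec:sssa}, applied to each three-qubit marginal $\rho_{ijk}$ of the global state. Since for a three-qubit system $I_\text{L}^{(3)}=I_\text{L}(\rho_{ijk})$ and $I_\text{L}^{(2)}=I_\text{L}(\rho_{ij})+I_\text{L}(\rho_{ik})+I_\text{L}(\rho_{jk})$, this reads
\begin{equation}
I_\text{L}(\rho_{ijk})\leq\frac{1}{3}\big[I_\text{L}(\rho_{ij})+I_\text{L}(\rho_{ik})+I_\text{L}(\rho_{jk})\big]
\end{equation}
for every triple $\{i,j,k\}$. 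Summing over all $\binom{n}{3}$ triples, the left-hand side becomes $I_\text{L}^{(3)}$, while on the right each pair $\{i,j\}$ occurs in exactly $n-2$ triples, giving $\frac{1}{3}(n-2)\sum_{\{i,j\}}I_\text{L}(\rho_{ij})=\frac{n-2}{3}I_\text{L}^{(2)}$, which is the claim. (Equivalently one may phrase this as an induction on $n$ mirroring Lemma~\ref{lem:l1}: from $\sum_{j=1}^{n+1}I_\text{L}^{(3)}(\rho_{\bar j})=(n-2)I_\text{L}^{(3)}(\rho)$ and $\sum_{j=1}^{n+1}I_\text{L}^{(2)}(\rho_{\bar j})=(n-1)I_\text{L}^{(2)}(\rho)$, with base case $n=3$ supplied by Theorem~\ref{thm:ssa}.)

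I expect the only real work to be the bookkeeping of the first step: checking that the $n$-dependence in~(\ref{eq:StoI}) is exactly cancelled by the subset-counting so that $I_\text{L}^{(k)}$ is additive over $k$-element marginals. Once that additivity is established the rest is purely combinatorial and treats Theorem~\ref{thm:ssa} as a black box, so no further analytic input is needed; one should also note that for $n=3$ the statement is literally Theorem~\ref{thm:ssa}, so the corollary is a proper generalization.
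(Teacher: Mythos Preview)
Your proposal is correct and matches the paper's approach: the paper states only that the corollary follows ``using the same induction trick as in the proof of Lemma~\ref{lem:l1}'', and your marginal-summation argument (together with the inductive rephrasing you mention) is precisely that trick, with the additivity identity $\sum_{|K|=k}I_\text{L}(\rho_K)=I_\text{L}^{(k)}$ being the key bookkeeping step that makes the counting go through.
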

\noindent In terms of sector lengths, the bound reads 
\begin{eqnarray}
    \binom{n}{3}-\frac13 \binom{n-1}{2} A_1 - \frac13 \binom{n-2}{1}A_2 + A_3 \geq 0.
\end{eqnarray}
Using a linear program, it is evident that this equation is stronger than the shadow inequalities (\ref{eq:shadowineq}).
As this bound is complementary to the bound $A_3\leq \binom{n}{3}$, we list it as well in table~\ref{tab:translation}.

\section{Conclusions}
We showed how to combine methods from quantum mechanics, coding theory and 
semidefinite programming to obtain strict bounds on linear combinations of 
sector lengths for multi-qubit systems. As a result, we obtained a full 
characterization of the allowed tuples of sector lengths for $n\leq 3$, 
where for $n=3$ one of the constraints is related to a symmetrized 
version of strong subadditivity of linear entropies. Our results can 
be understood in the language of entropy inequalities and monogamy 
relations, they can also be used  in the context of entanglement 
detection and the representability problem.

Our results highlight several problems for further research. First of all, 
the natural question of a complete characterization of sector bounds for 
$n \geq 4$, but also for higher-dimensional systems beyond qubits arises. The notion of sector lengths can be extended to higher-dimensional states as well, and many of the techniques like state inversion can be generalized. This has been used in the past to obtain some bounds \cite{eltschka2018distribution, eltschka2018exponentially}, however, a complete characterization is still out of reach.
Interestingly, we found that for $n\leq3$, the allowed region of sector bounds turned 
out to be a polytope, perfectly described by few linear constraints. The 
reason for this remains elusive and deserves further attention, as it may 
yield deep insight into the complicated structure of the positivity 
constraints. It might well be that this is a feature exclusive to 
qubit systems, or systems of few particles only. Apart from a similar 
characterization of higher-dimensional states of more parties, a deeper 
understanding of the associated entropy inequalities is crucial. For instance, 
the question of whether the inequality holds for other entropies is relevant. 

Finally, we used the bounds on the individual sector lengths for the task of entanglement detection and showed, for instance, that $A_4$ alone does not allow for entanglement detection in four-qubit states. The question of whether linear combinations of sector lengths allow for better entanglement criteria in this scenario is worth to be investigated in the future.

\ack
We thank Christopher Eltschka, Felix Huber, Jens Siewert, Timo Simnacher,
and Zhen-Peng Xu for useful discussions. This work has been supported by 
the DFG, the ERC (Consolidator Grant No. 683107/TempoQ), and the House of 
Young Talents of the Universität Siegen.

\appendix
\setcounter{section}{1}

\section{Proof of Proposition \ref{prop:c2}}

In this section, we prove Proposition \ref{prop:c2} from the main text:

\setcounterref{definition}{prop:c2}
\addtocounter{definition}{-1}
\begin{prop}
For all qubit states of $n$ parties with $n \geq 4$, it holds that
$\sum_{j=2}^{n}A_{2}(\rho_{1j})\leq n-1$.
\end{prop}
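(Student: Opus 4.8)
The plan is to run the same inductive bookkeeping as in the proof of Lemma~\ref{lem:l1}, but applied to the non-symmetric quantity $Q(\rho):=\sum_{j=2}^{n}A_{2}(\rho_{1j})$, which collects only the bipartite correlations anchored at qubit~$1$. Since each $A_{2}(\rho_{1j})$ is convex in $\rho$, so is $Q$, hence it suffices to bound $Q$ on pure states; this is needed only for the base case. The statement to prove by induction on $n\geq4$ is $Q(\rho)\leq n-1$ for every $n$-qubit state.

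For the inductive step, assume the bound at $n$ and let $\rho$ be an $(n+1)$-qubit state. For each $k\in\{2,\ldots,n+1\}$ form the $n$-qubit marginal $\rho_{\bar k}:=\trace_{k}(\rho)$, which still contains qubit~$1$. The induction hypothesis (applicable since $n\geq4$) gives $\sum_{j\in\{2,\ldots,n+1\},\,j\neq k}A_{2}(\rho_{1j})\leq n-1$, because the two-party marginal $\rho_{1j}$ is unchanged by discarding a third party~$k$. Summing these $n$ inequalities over $k$, the term $A_{2}(\rho_{1j})$ is counted once for each $k\neq j$, i.e.\ exactly $n-1$ times, so the left-hand side equals $(n-1)\,Q(\rho)$ and the right-hand side is $n(n-1)$; dividing out yields $Q(\rho)\leq n=(n+1)-1$. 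This is literally the argument of Lemma~\ref{lem:l1}, now tracking the qubit-$1$-anchored two-body weights instead of the $k$-body weights.

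The substantive part is the base case $n=4$, i.e.\ $A_{2}(\rho_{12})+A_{2}(\rho_{13})+A_{2}(\rho_{14})\leq3$ for all $4$-qubit states, and I do not expect the three-qubit bounds to suffice for it. Writing $Q=a_{12}+a_{13}+a_{14}$ in terms of the nonnegative quantities $a_{E}$ (the sum of $\trace[\Xi\rho]^{2}$ over Pauli operators $\Xi$ supported exactly on $E$), the relations $a_{1j}+a_{1k}+a_{jk}=A_{2}(\rho_{1jk})\leq3$ coming from Proposition~\ref{prop:c1} only give $Q\leq\frac{9}{2}$, and that relaxation is not tight. I would instead feed the full positivity of $\rho$ into a semidefinite program for a relaxed version of the problem, as in the proof of Theorem~\ref{thm:ssa}: maximize $Q(\rho)$ over $4$-qubit states (by convexity it is enough to take $\rho$ pure; alternatively, since the symmetrization $\rho\mapsto\frac12(\rho+Y^{\otimes4}\rho^{\text{T}}Y^{\otimes4})$ preserves all two-body correlations, one may restrict to states with vanishing odd sectors) and certify that the optimum equals $3$. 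Equivalently, one can look for a dual certificate expressing $3-Q$ as a nonnegative combination of the ``qubit-$1$-biased'' shadow inequalities, obtained from (\ref{eq:preshadow}) by summing only over subsets $T$ that contain qubit~$1$, together with the trivial inequalities $a_{E}\geq0$; such a certificate, if it exists, gives a purely analytic proof.

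I expect the base case to be the only real obstacle. It is exactly the point where a genuine monogamy constraint must be invoked: for $n=3$ the analogous inequality is \emph{false}, since a Bell pair shared between qubits~$1$ and~$3$ already has $A_{2}(\rho_{13})=3>2$, so the statement cannot follow from linear marginal-level relations and must use the quadratic positivity of $\rho$. Once the base case is established, the induction carries it to all $n\geq4$ without change, and the bound is tight because the product state $\ket{0\ldots0}$ has $A_{2}(\rho_{1j})=1$ for every $j$, hence $Q=n-1$.
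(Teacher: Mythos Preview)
Your induction from $n$ to $n+1$ is correct and is exactly the bookkeeping the paper uses (the paper phrases it as summing the $n=4$ inequality over all $\binom{n-1}{3}$ four-element subsets containing qubit~$1$, but this is equivalent to your step-by-step induction).

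The genuine gap is the base case $n=4$. You correctly identify that this is where the real content lies, but you do not prove it: you only propose to run an SDP or to search for a shadow-inequality certificate, without carrying either out. The paper gives an elementary analytic argument that you are missing. Note that the $27$ Pauli operators whose squared expectations make up $Q(\rho)=A_{2}(\rho_{12})+A_{2}(\rho_{13})+A_{2}(\rho_{14})$ are $\sigma_{a}\sigma_{b}\one\one$, $\sigma_{a}\one\sigma_{b}\one$, $\sigma_{a}\one\one\sigma_{b}$ for $a,b\in\{x,y,z\}$. These can be partitioned into three sets of nine \emph{pairwise anticommuting} operators, for instance
\[
M_{1}=\{X\sigma_{b}\one\one,\;Y\one\sigma_{b}\one,\;Z\one\one\sigma_{b}:b\in\{x,y,z\}\},
\]
and $M_{2},M_{3}$ obtained by cyclically permuting $X,Y,Z$ on the first qubit. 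Within each $M_{i}$, any two operators either share the same letter on qubit~$1$ and differ on exactly one of the remaining qubits, or differ on qubit~$1$ and agree (with identities) elsewhere; in both cases they anticommute. For any set $M$ of pairwise anticommuting observables with $m^{2}=\one$ one has $\sum_{m\in M}\langle m\rangle^{2}\leq1$ (see e.g.\ \cite{toth2009spin,Wehner2008}). Summing over the three sets yields $Q(\rho)\leq3$ for every four-qubit state, without any appeal to purity, SDPs, or shadow inequalities.

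So your outline is right about where the difficulty is, but the paper's resolution is much cleaner than the computer-assisted routes you propose.
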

\begin{proof}
We prove the claim for $n=4$ first. In this case we distribute all Pauli operators whose expectation values contribute to the bipartite sector lengths into anticommuting sets,
\begin{eqnarray*}\fl
\phantom{Ig}M_{1} & =  \{\X\X\one \one ,\X\Y\one \one ,\X\Z\one \one , \: \Y\one \X\one ,\Y\one \Y\one ,\Y\one \Z\one , \: \Z\one \one \X,\Z\one \one \Y,\Z\one \one \Z\},\\ \fl
\phantom{Ig}M_{2} & =  \{\Y\X\one \one ,\Y\Y\one \one ,\Y\Z\one \one , \: \Z\one \X\one ,\Z\one \Y\one ,\Z\one \Z\one , \: \X\one \one \X,\X\one \one \Y,\X\one \one \Z\},\\ \fl
\phantom{Ig}M_{3} & =  \{\Z\X\one \one ,\Z\Y\one \one ,\Z\Z\one \one , \: \X\one \X\one ,\X\one \Y\one ,\X\one \Z\one , \: \Y\one \one \X,\Y\one \one \Y,\Y\one \one \Z\},
\end{eqnarray*}
such that in each set all operators pairwise anticommute. Here, $XX\one \one$ means again $X\otimes X \otimes \one \otimes \one$. For any anticommuting set $M$, it holds that $\sum_{m\in M} \langle m \rangle^2 \leq 1$ \cite{Kurzyski2011, toth2009spin, Wehner2008}. The sets are chosen such that
\begin{eqnarray}
\sum_{j=2}^{4}A_{2}(\rho_{1j}) = \sum_{i=1}^3 \sum_{m\in M_i}  \langle m \rangle^2 \leq 3.
\end{eqnarray}

To augment the proof to the case of $n>4$, we consider all $\binom{n-1}{3}$ subsets of four of the parties containing the first one, i.e., for $n=5$ we would consider the sets $\{1,2,3,4\}, \{1,2,3,5\}, \{1,2,4,5\}$ and $\{1,3,4,5\}$. For each of these subsets, the inequality for four parties holds. Summing these inequalities yields, on the one hand, an upper bound of $3\binom{n-1}{3}$. On the other hand, we obtain each of the two-body correlations $A_2({\rho_{1j}})$ exactly $\binom{n-2}{2}$ times. Dividing both sides by this factor proves the claim.
\end{proof}

\section{Proof of Theorem \ref{thm:ssa}}

In this section, we prove the symmetric strong subadditivity for three-qubit states.

\setcounterref{definition}{thm:ssa}
\addtocounter{definition}{-1}
\begin{thm}
For $3$-qubit states, it holds that $A_1 + A_2 \leq 3(1+A_3)$.
\end{thm}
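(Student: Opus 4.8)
The plan is to translate the inequality $A_1 + A_2 \le 3(1+A_3)$ into a positivity statement about a relaxed, finite-dimensional object that can be attacked by semidefinite programming. First I would reduce to pure three-qubit states: by convexity of each $A_k$ (established in the main text), and since the map $\rho \mapsto 3(1+A_3(\rho)) - A_1(\rho) - A_2(\rho)$ is affine in the purities $\operatorname{Tr}[P_k(\rho)^2]$ hence concave on the set $\{\rho\}$ once restricted appropriately, it suffices to check the bound on extreme points, i.e.\ pure states $\rho = \ketbra{\psi}{\psi}$. For pure states the MacWilliams identities $M_m = 0$ from (\ref{eq:schmidt}) give linear relations among $A_0,\dots,A_3$, which I would use to re-express the target inequality; with $n=3$ and $A_0 = 1$ the relations $M_1 = M_2 = 0$ fix $A_2 = 3$ and determine $A_1$ in terms of $A_3$ (one finds $A_1 = 3 - 3A_3/ \ldots$ up to the precise coefficients), so on pure states the inequality becomes a \emph{single-variable} claim about the remaining free sector length. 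Wait --- that would make the bound trivially saturated on the whole pure-state curve, which matches the picture in figure~\ref{fig:threequbits} where pure states lie on the red line; so the real content is that no \emph{mixed} state pokes out past the plane spanned by that pure-state edge and the $A_2$-axis.

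Given that, the honest route is not the pure-state reduction alone but a direct SDP certificate. Here is the approach I would actually carry out. Write $f(\rho) := 3 + 3A_3(\rho) - A_1(\rho) - A_2(\rho)$. Each $A_k(\rho)$ is a quadratic form in the entries of $\rho$, specifically $A_k(\rho) = \operatorname{Tr}[(\rho \otimes \rho)\,\Pi_k]$ for an explicit permutation-type operator $\Pi_k$ on $(\mathbb{C}^2)^{\otimes 3} \otimes (\mathbb{C}^2)^{\otimes 3}$ built from the Bloch-basis projectors onto the $k$-body sector (using $A_k = 2^{-n}\operatorname{Tr}[P_k(\rho)^2]$ and expanding $P_k$ in Pauli strings). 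So $f(\rho) = \operatorname{Tr}[(\rho\otimes\rho)\, W]$ for a fixed Hermitian $W$, and the claim is that $\operatorname{Tr}[(\rho\otimes\rho)W] \ge 0$ for every state $\rho$. The natural relaxation: replace the rank-one-in-the-doubled-space, PPT-across-the-two-copies, symmetric operator $\rho\otimes\rho$ by a general bipartite operator $\sigma_{12}$ on $\mathcal{H}\otimes\mathcal{H}$ satisfying $\sigma_{12} \ge 0$, $\sigma_{12} = \mathbb{F}\sigma_{12}\mathbb{F}$ (swap symmetry of the two copies), $\operatorname{Tr}\sigma_{12} = 1$, $\sigma_{12}^{T_2} \ge 0$ (PPT, since $\rho\otimes\rho$ is separable), and the marginal $\operatorname{Tr}_2 \sigma_{12} = \operatorname{Tr}_1\sigma_{12} =: \rho$ is a valid state. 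Then $\min_{\sigma_{12}} \operatorname{Tr}[\sigma_{12} W]$ subject to these constraints is an SDP, and if the optimum is $\ge 0$ we are done. I would run this SDP, inspect the dual, and extract a closed-form dual certificate --- a decomposition of $W + (\text{positive combination of the constraint operators})$ as a manifestly positive operator --- which then constitutes a rigorous, computer-free proof once exhibited.

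The main obstacle is twofold. First, the relaxation must be \emph{tight enough}: the bare "$\rho\otimes\rho$, positive, PPT, swap-symmetric, with consistent marginal" constraints may not force the optimum up to $0$, and one may need to add the next level of the symmetric extension hierarchy (a three-copy operator $\sigma_{123}$ on $\mathcal{H}^{\otimes 3}$, fully symmetric, PPT across every cut, with $\operatorname{Tr}_3 \sigma_{123} = \sigma_{12}$) --- this is the step where numerical experimentation drives the choice, and getting a rank-one feasible point on the boundary to match the numerical optimum is the delicate part. Second, converting the numerical dual solution into an exact rational certificate requires care: one guesses the exact optimal dual multipliers (they should be small rationals, consistent with the "polytope" structure), verifies by exact arithmetic that $W - \sum_i y_i C_i \succeq 0$ for the constraint operators $C_i$, and checks complementary slackness against the known saturating states (the GHZ state $(0,3,4)$ and product states $(3,3,1)$ both lie on the bounding plane $A_1 + A_2 = 3(1+A_3)$, giving $6 = 6$ and $6 = 6$ --- a useful consistency check). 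I would organize the appendix proof around displaying the certificate and verifying these two positivity/slackness conditions, relegating the SDP solve itself to a remark about how the certificate was found.
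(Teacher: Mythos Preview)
Your core strategy matches the paper's: rewrite $3+3A_3-A_1-A_2$ as $\operatorname{Tr}[(\rho\otimes\rho)\,W]$ for an explicit Hermitian $W$ on the doubled Hilbert space, then relax the product constraint $\sigma=\rho\otimes\rho$ to a tractable SDP feasible set and certify nonnegativity of the minimum. The paper arrives at the same $W$ via a Choi-matrix formulation of a sum of partial state inversions and then minimizes over symmetric $\sigma$.

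Where you diverge is in the relaxation menu. You propose swap symmetry, PPT, marginal consistency, and escalation to the DPS symmetric-extension hierarchy if needed. The paper instead imposes swap symmetry, the induced permutation symmetry among the three qubit factors, invariance under a large discrete local-unitary group, PPT, positivity under the Breuer--Hall map with $U=Y^{\otimes 3}$, and a linearized product constraint of the form $\langle X\one\one\,X\one\one\rangle_\sigma \le \pm\langle X\one\one\,\one\one\one\rangle_\sigma$ (two SDP runs). The paper does \emph{not} extract a dual certificate; it simply reports that the SDP optimum is zero. Your plan to pull out an exact dual decomposition is more ambitious and would yield a cleaner proof, but you have not yet verified that PPT (or PPT plus one DPS level) is actually tight here --- the paper's need for Breuer--Hall and the linearized constraint suggests that bare PPT is insufficient, so this is the step you would need to work through.

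Two small corrections to your scratchwork. First, the pure-state reduction is not ``trivially saturated on the whole pure-state curve'': on pure three-qubit states $A_2=3$ and $A_1+A_3=4$, so the inequality becomes $A_3\ge 1$, which is equivalent to $A_1\le 3$ and is saturated only at the product-state endpoint $(3,3,1)$, not along the whole curve. Second, the GHZ point $(0,3,4)$ does \emph{not} lie on the plane $A_1+A_2=3(1+A_3)$ (it gives $3$ versus $15$); your complementary-slackness check should use the families on the blue surface (e.g.\ the paper's $\rho_D$) rather than GHZ.
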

\begin{proof}
Consider the map $\rho^\prime = M(\rho) := (YY\one)\rho^{\text{T}_{12}}(YY\one) + (Y\one Y)\rho^{\text{T}_{13}}(Y\one Y) + (\one YY)\rho^{\text{T}_{23}}(\one YY)$,
where $\rho^{\text{T}_{ij}}$ is the partial transpose of $\rho$ w.r.t.~systems $i$ and $j$. This map can be seen as a sum of partial state inversions of subsystems of size two, flipping the sign of the Pauli matrices of that particular subsystems. Using the Bloch decomposition, it can easily be seen that $\trace(\rho \rho^\prime) = \frac{1}{8}(3 - A_1 - A_2 + 3A_3)$. 
Note that the map defined above is not positive, however, we will show that $\trace(\rho \rho^\prime)\geq 0$ for all $\rho$, yielding the claim.

To that end, we consider the Choi matrix $\eta$ of the map, given by $\eta = (\one\otimes M)(\ketbra{\phi^+}{\phi^+})$ with $\ket{\phi^+} = \frac{1}{\sqrt{8}}\sum_{i=0}^7 \ket{ii}$ \cite{dePillis1967linear, choi1975completely}. The map can be reconstructed via $M(\rho) = 2^3\trace_A[(\rho^{\text{T}}\otimes \one)\eta]$. Thus, the quantity in question can be written in terms of the Choi matrix as $\trace(\rho \rho^\prime) = 2^3\trace[ (\rho \otimes \rho) \eta^{\text{T}_A}]$. As $M$ is not positive, $\eta$ is not positive as well, and one can directly calculate that $\eta^{\text{T}_A}$ has a single negative eigenvalue of $-3/2$. Nevertheless, it is positive for symmetric product states $\rho \otimes \rho$. To see this, we use an SDP to minimize $\trace(\sigma \eta^{\text{T}_A})$ over symmetric states $\sigma$ and trying to enforce the product structure on $\sigma$ using some relaxations of this property.

To begin with, the matrix $\eta^{\text{T}_A}$ can be written in Bloch decomposition as
\begin{eqnarray}\fl
     \eta^{\text{T}_A} \propto 3\one\one\one\;\one\one\one  - \!\!\!\!\sum_{a\in\{x,y,z\}}\!\!\!\! \sigma_a \one \one \;  \sigma_a \one \one  - \!\!\!\!\sum_{a\in\{x,y,z\}}\!\!\!\! \one \sigma_a \one \; \one \sigma_a \one  - \!\!\!\!\sum_{a\in\{x,y,z\}}\!\!\!\! \one \one \sigma_a \; \one \one \sigma_a \nonumber \\ \fl \qquad\qquad
      - \!\!\!\!\sum_{a,b\in\{x,y,z\}}\!\!\!\! \sigma_a \sigma_b \one \;  \sigma_a \sigma_b \one  - \!\!\!\!\sum_{a,b\in\{x,y,z\}}\!\!\!\! \sigma_a \one \sigma_b \; \sigma_a \one \sigma_b  - \!\!\!\!\sum_{a,b\in\{x,y,z\}}\!\!\!\! \one \sigma_a \sigma_b \; \one \sigma_a \sigma_b \nonumber \\ \fl \qquad\qquad
      +3\!\!\!\!\sum_{a,b,c\in\{x,y,z\}} \!\!\!\!\sigma_a \sigma_b \sigma_c \; \sigma_a \sigma_b \sigma_c.
\end{eqnarray}
Note that due to the special symmetric form of the basis elements, the matrix can also be written as a combination of local flip operators. This allows to write the matrix also in terms of projectors onto the symmetric and antisymmetric subspaces. This representation of the problem is explained in more detail in Appendix C.

The matrix $\eta^{\text{T}_A}$ exhibits many symmetries; it is symmetric under the exchange of the first three and the second three parties. Also, it is symmetric under any permutation among the first three parties, if the same permutation is applied to the second three parties as well.  Furthermore, it is invariant under single qubit local unitaries $V\one\one V\one\one$ for $V \in \{X,Y,Z,\Pi, T, H\}$ where $\Pi = \diag(1,i)$, $T=\diag(1,\exp(i\pi/4))$ and $H$ being the Hadamard gate.

All of these symmetries do not alter the product structure of $\rho \otimes \rho$ and can therefore be imposed for the optimal state as well. 

Apart from the symmetries, we can try to impose the product structure of $\sigma$. However, this is a non-linear constraint and thus not exactly tractable by an SDP. Nevertheless, we find a set of linear constraints that brings us close enough to the set of product states to prove the claim.

First of all, product states are separable by definition and must have a positive partial transpose, i.e. $\sigma^{\text{T}_A} \geq 0$ \cite{Peres1996}.
Next, using the positivity of Breuer-Hall maps, for separable states $\sigma$ and skew symmetric unitaries $U$, i.e., $U^\text{T} = -U$, it holds that
$\sigma_{\text{BH}} = \trace_{4,5,6}(\sigma) \otimes \one\one\one - \sigma - (\one\one\one \otimes U)\sigma^{\text{T}_B}(\one\one\one \otimes U^\dagger) \geq 0$ \cite{Breuer2006, Hall2006}. It turns out that the choice of $U=YYY$ is suitable in our case.

As a last constraint, for product states, $\langle A \otimes A \rangle_{\rho \otimes \rho} = \langle A \rangle_\rho^2 \geq 0$ for all three-qubit observables $A$. Here, we consider the special choice of $A=X\one\one$. For product states, it should hold that $\langle A\otimes A\rangle_\sigma = \langle A\otimes \one\one\one \rangle_\sigma^2$, as $\sigma$ is symmetric as noted before. To make this constraint linear, note that for Pauli observables, $\vert \langle A \rangle \vert \leq 1$. Thus, $\langle A\otimes A\rangle_\sigma \leq \vert \langle A\otimes \one\one\one \rangle_\sigma \vert$. Now, there are two possibilities. Either, the optimal state obeys $\langle A\otimes \one\one\one \rangle_\sigma \geq 0$ or $\langle A\otimes \one\one\one \rangle_\sigma \leq 0$. Therefore, we run the SDP twice, once with the constraint $\langle A\otimes A\rangle_\sigma \leq \langle A\otimes \one\one\one \rangle_\sigma$ and once with $\langle A\otimes A\rangle_\sigma \leq -\langle A\otimes \one\one\one \rangle_\sigma$.

To summarize, we run the following SDP:
\begin{eqnarray} \fl
    \phantom{Ig}\underset{\sigma}{\text{min}} & \trace(\sigma \eta^{\text{T}_A}) \\ \fl
    \phantom{Ig}\text{subject~to~}\quad & \sigma \geq 0, \\
                            & \sigma \text{~symmetric}, \\
                            & (V\one\one\,V\one\one) \sigma (V\one\one\,V\one\one) = \sigma \text{~for~} V\in\{X,Y,Z,\Pi,T,H\}, \\
                            & \sigma^{\text{T}_A}\geq 0,~ \sigma_{\text{BH}}\geq0,\\
                            & \trace[(A\otimes A)\sigma] \geq 0 \text{~for~all~observables~}A, \\
                            & \trace[(X\one\one\,X\one\one) \sigma] \leq \pm \trace[(X\one\one\,\one\one\one) \sigma ].
\end{eqnarray}

Here, the symmetry constraint means both, symmetric under exchange of the first three with the last three parties, as well as symmetric under exchange among the first three and the same exchange among the last three parties. The last three constraints are the linear approximations of the product structure, where the $\pm$ in the last constraint means that we run the SDP once for each choice.
Both cases yield a minimal trace of zero, proving the claim.
\end{proof}

The method presented here can also be used to prove bounds for arbitrary linear combinations $\sum_k c_k A_k$. In this case, one has to choose $\eta^{\text{T}_A} = \sum_k c_k \sum_{\Xi_k} \Xi_k \otimes \Xi_k$, where the inner sum iterates over all Pauli operators $\Xi_k$ acting on $k$ of the parties nontrivially, as well as choosing appropriate relaxations of the product structure.

\section{Representation using symmetric and antisymmetric subspaces}

As noted in Appendix B, finding bounds on linear combinations of sector lengths is equivalent to solving a quadratic program to find $\min_\rho \trace[(\rho^{(A)} \otimes \rho^{(B)}) \eta]$
with $\eta = \sum_k c_k \sum_{\Xi_k}  \Xi_k^{(A)} \otimes \Xi_k^{(B)}$. Due to the special symmetric form, it is possible to express $\eta$ in terms of local flip operators $F=\frac12\sum_{j=0,x,y,z} \sigma_j \otimes \sigma_j$, which in turn can be written in their eigenbasis with the eigenvectors given by the projectors $\Pi_- = \ketbra{\Psi^-}{\Psi^-}$ and $\Pi_+ = \ketbra{\Psi^+}{\Psi^+} + \ketbra{\Phi^-}{\Phi^-} +\ketbra{\Phi^+}{\Phi^+}$ onto the antisymmetric and symmetric subspace, respectively. Here, $\ket{\Psi^\pm}$ and $\ket{\Phi^\pm}$ denote the usual Bell states. In this representation, the linear combination of sector lengths can be expressed as
\begin{eqnarray}
    \eta = \sum_{i_1\ldots i_n=\pm} \tilde c_{i_1\ldots i_n} \Pi_{i_1\ldots i_n}
\end{eqnarray}
with $\Pi_{i_1\ldots i_n} = \Pi_{i_1}^{(A_1,B_1)} \otimes \ldots \otimes \Pi_{i_n}^{(A_n,B_n)}$. The prefactors $\tilde{c}$ are connect to the prefactors $c$. This representation was used before to find entanglement witnesses and monotones \cite{mintert2005concurrence, mintert2007observable, demkowicz2006evaluable}. In these references, the authors restrict themselves to $\tilde{c}_{i_1\ldots i_n} \geq 0$ to ensure positivity. As we have seen in Appendix B, this approach is too restrictive, as positivity under trace with symmetric product states is not equivalent to positivity of the matrix $\eta$. Nevertheless, it is interesting to note that the relevant inequalities in the case of three qubits have a particular form in this representation. The matrix $\eta$ that yields the symmetric strong subadditivity is obtained by choosing $\tilde{c}_{---} = -3, \tilde{c}_{--+} = \tilde{c}_{-+-} = \tilde{c}_{+--} = 1$ and all other prefactors vanishing. The constraint $A_2 \leq 3$, however, can be expressed by choosing $\tilde{c}_{---} = -3, \tilde{c}_{-++} = \tilde{c}_{+-+} = \tilde{c}_{++-} = 1$. Usual state inversion is represented by $\tilde{c}_{---} = 1$. Therefore, it seems that the relevant inequalities correspond to some sort of extremal points in the set of coefficients $\tilde{c}$ that yield matrices that are positive under trace with positive product operators.

\section{Three-qubit states spanning the whole sector length space}

In this section, we explicitly state families of states that cover the whole three-qubit sector-length space displayed in figure~\ref{fig:threequbits}. First, we give families of states covering the yellow surface displayed in figure~\ref{fig:yellowsurface}, corresponding to the state inversion bound:
\begin{eqnarray}
    \fl \phantom{Ig}\rho_A(p,\alpha) & = p\ketbra{G(\alpha)}{G(\alpha)} + \frac{1-p}{8}(\one + XXX), \label{eq:fam1} \\
    \fl \phantom{Ig}\rho_B(p,\alpha) & = p\ketbra{H(\alpha)}{H(\alpha)}  + \frac{1-p}{8}(\one + \cos(\alpha)Z\one\one + \sin(\alpha) XXX), \\
    \fl \phantom{Ig}\rho_C(p,q) & = \frac{p}2 \one\otimes \ketbra{00}{00} + \frac{q}2 \one \otimes \ketbra{01}{01} + (1-p-q) \ketbra{000}{000}, \label{eq:fam3}
\end{eqnarray}
with the abbreviations
\begin{eqnarray}
    \fl \phantom{Ig}\ket{G(\alpha)} & = \frac{1}{\sqrt{1 + \cos(\frac\alpha2)\sin(\frac\alpha2)}} \left[\cos(\frac\alpha2) \ket{\text{GHZ}} + \sin(\frac\alpha2)\ket{+\!+\!+}\right],\\
    \fl \phantom{Ig}\ket{H(\alpha)} & = \left[\cos(\frac\alpha2)\ket{0} -  \sin(\frac\alpha2)\ket{1}\right]\otimes \ket{+-},
\end{eqnarray}
where $0\leq p\leq 1$, $0\leq q \leq p$ and $0\leq \alpha \leq \pi$.

Second, the blue surface corresponding to symmetric strong subadditivity is spanned by the states
\begin{eqnarray}
    \fl \phantom{Ig}\rho_D(\alpha,\beta) = \frac{p}{2}\ketbra{\Phi(\alpha)}{\Phi(\alpha)} \otimes \one + \frac{1-p}{2} \ketbra{\Phi(\beta)}{\Phi(\beta )} \otimes \one \label{eq:fam4}
\end{eqnarray}
where $\ket{\Phi(\alpha)} = \cos(\alpha/2)\ket{00} + \sin(\alpha/2)\ket{11}$ and $p = \sin(\beta)/[\sin(\alpha) + \sin(\beta)]$. The angles $\alpha$ and $\beta$ take arbitrary values between $0$ and $\pi$.

All other states can be reached by mixing these states with their inverted states, defined by $\tilde{\rho} := Y^{\otimes n} \rho^{\text{T}} Y^{\otimes n}$.
\pagebreak
\noappendix
\setcounter{figure}{2}
\begin{figure}[h!]
    \centering
    \includegraphics[width=1\columnwidth]{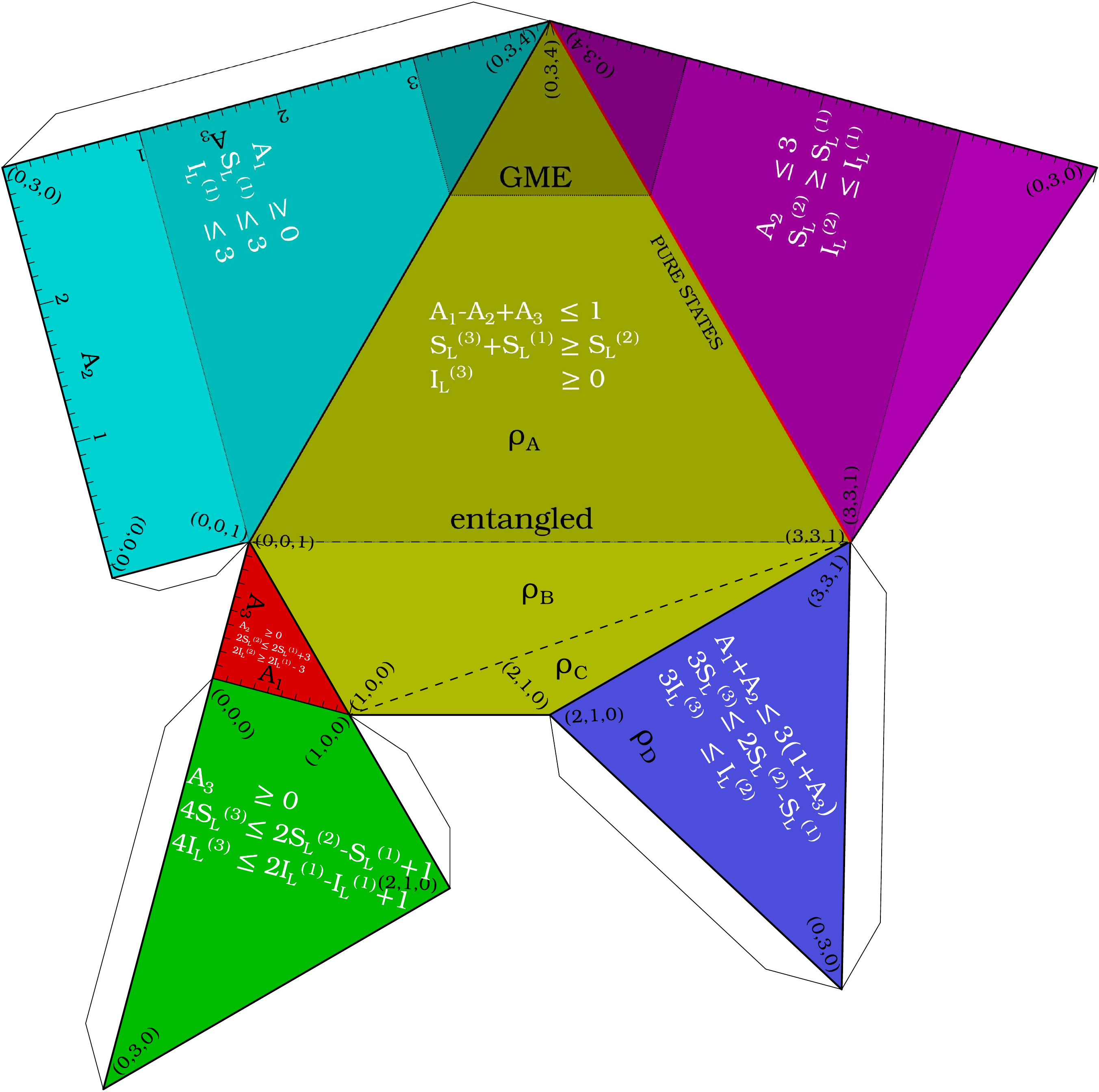}
    \caption{The surface of the polytope of admissible sector lengths of three-qubit states with the regions covered by the families of states (\ref{eq:fam1}) - (\ref{eq:fam3}) on the yellow surface corresponding to the state inversion bound $A_1-A_2+A_3=1$. The family of states (\ref{eq:fam4}) covers the whole of the blue surface corresponding to symmetric strong subadditivity $A_1+A_2 \leq 3(1+A_3)$. }
    \label{fig:yellowsurface}
\end{figure}
\pagebreak
\appendix
\newcommand{\newblock}{}
\bibliographystyle{apsrev4-1}
\bibliography{cite}

\end{document}